\def\MMGF{\mathrm{MMGF}}
\def\Tr{\mathrm{Tr}}
\def\calH{\mathcal{H}}
\def\calK{\mathcal{K}}
\def\supp{\mathsf{supp}}
\theoremstyle{plain}
\newtheorem{theo}{Theorem}[section]
\newtheorem{theorem}[theo]{Theorem}
\newtheorem{lemma}[theo]{Lemma}
\newtheorem{corollary}[theo]{Corollary}
\newtheorem{proposition}[theo]{Proposition}
\theoremstyle{definition}
\newtheorem{definition}[theo]{Definition}
\newtheorem{example}[theo]{Example}
\theoremstyle{remark}
\newtheorem{remark}[theo]{Remark}
\begin{document}

\title{When Does Quantum Differential Privacy Compose?}

\author[1]{
Daniel Alabi
}

\author[2]{
Theshani Nuradha
}
\affil[1, 2] {
University of Illinois at Urbana-Champaign
}

\date{}

\maketitle

\begin{abstract}
Composition is a cornerstone of classical differential privacy, enabling strong end-to-end guarantees for complex algorithms through composition theorems (e.g., basic and advanced).
In the quantum setting, however, privacy is defined operationally against arbitrary measurements, and classical composition arguments based on scalar privacy-loss random variables no longer apply.
As a result, it has remained unclear when meaningful composition guarantees can be obtained for quantum differential privacy (QDP).

In this work, we clarify both the limitations and possibilities of composition in the quantum setting.
We first show that classical-style composition fails in full generality for POVM-based approximate QDP: even quantum channels that are individually perfectly private can completely lose privacy when combined through correlated joint implementations.

We then identify a setting in which clean composition guarantees can be restored.
For tensor-product channels acting on product neighboring inputs, we introduce a \emph{quantum moments accountant} based on an operator-valued notion of privacy loss and a matrix moment-generating function.
Although the resulting R\'enyi-type divergence does not satisfy a data-processing inequality, we prove that controlling its moments suffices to bound measured R\'enyi divergence, yielding operational privacy guarantees against arbitrary measurements.
This leads to advanced-composition-style bounds with the same leading-order behavior as in the classical theory.

Our results demonstrate that meaningful composition theorems for quantum differential privacy require carefully articulated structural assumptions on channels, inputs, and adversarial measurements, and provide a principled framework for understanding which classical ideas do and do not extend to the quantum setting.
\end{abstract}

\clearpage
\tableofcontents
\clearpage

\section{Introduction}
\label{sec:introduction}

Differential privacy (DP) provides a rigorous framework for limiting information leakage about sensitive inputs under randomized data analysis~\citep{DworkMNS06,DworkKMMN06}.
While the theory of classical differential privacy is by now mature, extending its guarantees to quantum information processing raises both conceptual and technical challenges~\citep{QDP_computation17}.
In the quantum setting, outputs are quantum states rather than classical samples, adversaries may perform arbitrary measurements, and correlations or entanglement across multiple outputs can fundamentally alter distinguishability properties~\citep{NC00,KSV02}.
These features complicate even basic questions about composition, a cornerstone of the classical DP theory.

In classical DP, composition guarantees are derived by tracking a scalar \emph{privacy-loss random variable} whose moment generating function (MGF) is additive under independent composition~\citep{DworkRoVa10}.
This structure underlies both basic composition and advanced composition results, including the moments accountant framework of Abadi et al.~\citep{AbadiChGoMcMiTaZh16}.
In contrast, for quantum channels privacy loss is not a scalar quantity prior to measurement, and naively taking a supremum over all measurements destroys additivity since measured quantum divergences are not always additive or at least sub-additive.
As a result, classical composition arguments do not directly extend to quantum differential privacy (QDP).

Much of the existing literature on composition in quantum differential privacy has taken an explicitly adversary-centric viewpoint~\citep{hirche2022quantum, NuradhaGW24}.
In this line of work, composition is analyzed primarily through the lens of what an adversary may do to the outputs of multiple private mechanisms, most notably by quantifying over increasingly powerful classes of measurements~\citep{NuradhaGW24,nuradha2025MeasuredHS}.
While this perspective is natural from an operational security standpoint, it largely abstracts away how quantum systems are actually composed in practice.
In quantum information processing, composition is implemented by specific channel constructions (such as tensor-product composition, factorized releases, or correlated joint implementations) whose structural properties can be as consequential for privacy as the adversary's measurement capabilities~\cite{AaronsonR19}.
By explicitly separating the model of channel composition from the model of adversarial measurement, our work shifts part of the focus from ``what measurements are allowed'' to ``how mechanisms are combined,'' and shows that many composition phenomena in quantum differential privacy are driven as much by the structure of the joint channel as by the power of the adversary.

This work develops a principled framework for understanding when and how composition guarantees can be recovered in the quantum setting.
Our analysis makes two key points.
First, classical-style composition fails in full generality for POVM-based approximate QDP when correlated joint channels or entangled neighboring inputs are allowed.
Second, under carefully articulated structural assumptions (most notably tensor-product channels acting on product neighboring inputs) one can recover clean and quantitatively sharp composition guarantees via an operator-level analogue of the classical moments accountant.

To achieve this, we introduce a \emph{quantum moments accountant} (QMA) based on the privacy-loss operator and a matrix moment-generating function.
While the resulting R\'enyi-type divergence does not satisfy a data-processing inequality in general, we show that controlling its moments suffices to bound \emph{measured R\'enyi divergence}, which by definition captures worst-case distinguishability over all measurements.
This yields a direct route from operator-level moment bounds to operational $(\varepsilon,\delta)$-QDP guarantees.

Throughout the paper, we emphasize the importance of distinguishing between different models of composition.
We propose a strict hierarchy (i.e., tensor-product, factorized $\subsetneq$ general joint composition) and show that failures of classical composition arise precisely when moving beyond tensor-product structure.
Our results therefore clarify not only what is possible in quantum composition, but also why additional assumptions are unavoidable.

\begin{figure}[t]
\centering
\begin{tikzpicture}[
  font=\small,
  arrow/.style={-Latex, line width=0.9pt},
  box/.style={draw, rounded corners=10pt, align=left, inner sep=8pt, text width=5.4cm},
  title/.style={font=\bfseries},
  tag/.style={draw, rounded corners=6pt, inner sep=2pt, font=\footnotesize},
  note/.style={font=\footnotesize, align=left, text width=5.4cm},
  x=1cm, y=1cm
]

\node[box] (tp) {
\textbf{Tensor-product composition}\\[2pt]
\textbf{Inputs:} $\rho_1,\ldots,\rho_m$ on $\calH_1,\ldots,\calH_m$\\
\textbf{Channel:} $A^{\otimes}=A_1\otimes\cdots\otimes A_m$\\
\textbf{Output:} $\bigotimes_{i=1}^m A_i(\rho_i)$\\[2pt]
\textbf{Intuition:} independent mechanisms on independent inputs.
};

\node[box, below=10mm of tp] (fac) {
\textbf{Factorized composition}\\[2pt]
\textbf{Input:} one state $\rho$ on $\calH$\\
\textbf{Channel:} $A^{\mathrm{fac}}(\rho)=\bigotimes_{i=1}^m A_i(\rho)$\\
\textbf{Output:} product across outputs (no correlations)\\[2pt]
\textbf{Intuition:} multiple independent ``views'' of the same input.
};

\node[box, below=10mm of fac] (joint) {
\textbf{General joint composition}\\[2pt]
\textbf{Input:} one state $\rho$ on $\calH$\\
\textbf{Channel:} any CPTP $A^{\mathrm{joint}}: \mathcal{D}(\calH)\to \mathcal{D}(\bigotimes_i \calK_i)$\\
\textbf{Constraint:} $\Tr_{K_{\setminus i}} A^{\mathrm{joint}}(\rho)=A_i(\rho)$ for all $i$\\
\textbf{Output:} may be correlated / entangled.
};

\draw[arrow] (tp.south) -- node[right, font=\footnotesize] {$\subsetneq$?} (fac.north);
\draw[arrow] (fac.south) -- node[right, font=\footnotesize] {$\subsetneq$?} (joint.north);

\coordinate (tpR) at ($(tp.east)+(1.2,0)$);
\node[tag, anchor=west] at ($(tpR)+(0,0.55)$) {$\calH_1$};
\node[tag, anchor=west] at ($(tpR)+(0,0.10)$) {$\calH_2$};
\node[tag, anchor=west] at ($(tpR)+(0,-0.35)$) {$\cdots$};
\node[tag, anchor=west] at ($(tpR)+(0,-0.80)$) {$\calH_m$};

\node[tag, anchor=west] at ($(tpR)+(1.4,0.55)$) {$\calK_1$};
\node[tag, anchor=west] at ($(tpR)+(1.4,0.10)$) {$\calK_2$};
\node[tag, anchor=west] at ($(tpR)+(1.4,-0.35)$) {$\cdots$};
\node[tag, anchor=west] at ($(tpR)+(1.4,-0.80)$) {$\calK_m$};

\draw[arrow] ($(tpR)+(0.65,0.55)$) -- ($(tpR)+(1.25,0.55)$);
\draw[arrow] ($(tpR)+(0.65,0.10)$) -- ($(tpR)+(1.25,0.10)$);
\draw[arrow] ($(tpR)+(0.65,-0.80)$) -- ($(tpR)+(1.25,-0.80)$);

\node[note, anchor=west] at ($(tpR)+(0,-1.35)$) {Each input subsystem is processed separately; no cross-talk.};

\coordinate (facR) at ($(fac.east)+(1.2,0)$);
\node[tag, anchor=west] (Hin) at ($(facR)+(0,0)$) {$\calH$};
\node[tag, anchor=west] (K1) at ($(facR)+(1.6,0.55)$) {$\calK_1$};
\node[tag, anchor=west] (K2) at ($(facR)+(1.6,0.10)$) {$\calK_2$};
\node[tag, anchor=west] (Kd) at ($(facR)+(1.6,-0.35)$) {$\cdots$};
\node[tag, anchor=west] (Km) at ($(facR)+(1.6,-0.80)$) {$\calK_m$};

\draw[arrow] ($(Hin.east)+(0.15,0.00)$) -- ($(K1.west)+(-0.15,0.00)$);
\draw[arrow] ($(Hin.east)+(0.15,0.00)$) -- ($(K2.west)+(-0.15,0.00)$);
\draw[arrow] ($(Hin.east)+(0.15,0.00)$) -- ($(Km.west)+(-0.15,0.00)$);

\node[note, anchor=west] at ($(facR)+(0,-1.30)$) {Same input; outputs are a product state across $\calK_1,\ldots,\calK_m$.};

\coordinate (jointR) at ($(joint.east)+(1.2,0)$);
\node[tag, anchor=west] (Hin2) at ($(jointR)+(0,0)$) {$\calH$};
\node[tag, anchor=west] (J1) at ($(jointR)+(1.6,0.35)$) {$\calK_1$};
\node[tag, anchor=west] (J2) at ($(jointR)+(1.6,-0.05)$) {$\calK_2$};
\node[tag, anchor=west] (Jd) at ($(jointR)+(1.6,-0.45)$) {$\cdots$};
\node[tag, anchor=west] (Jm) at ($(jointR)+(1.6,-0.85)$) {$\calK_m$};

\draw[arrow] ($(Hin2.east)+(0.15,0.00)$) -- ($(jointR)+(1.35,0.00)$);
\draw[line width=0.9pt] ($(jointR)+(1.95,0.10)$) to[out=0,in=0,looseness=1.3] ($(jointR)+(1.95,-0.60)$);
\node[font=\footnotesize, anchor=west] at ($(jointR)+(2.15,-0.25)$) {correlations allowed};

\node[note, anchor=west] at ($(jointR)+(0,-1.40)$) {Outputs can be correlated/entangled while matching prescribed marginals.};

\end{tikzpicture}
\caption{A hierarchy for multi-output composition models.}
\label{fig:composition-hierarchy-clean}
\end{figure}
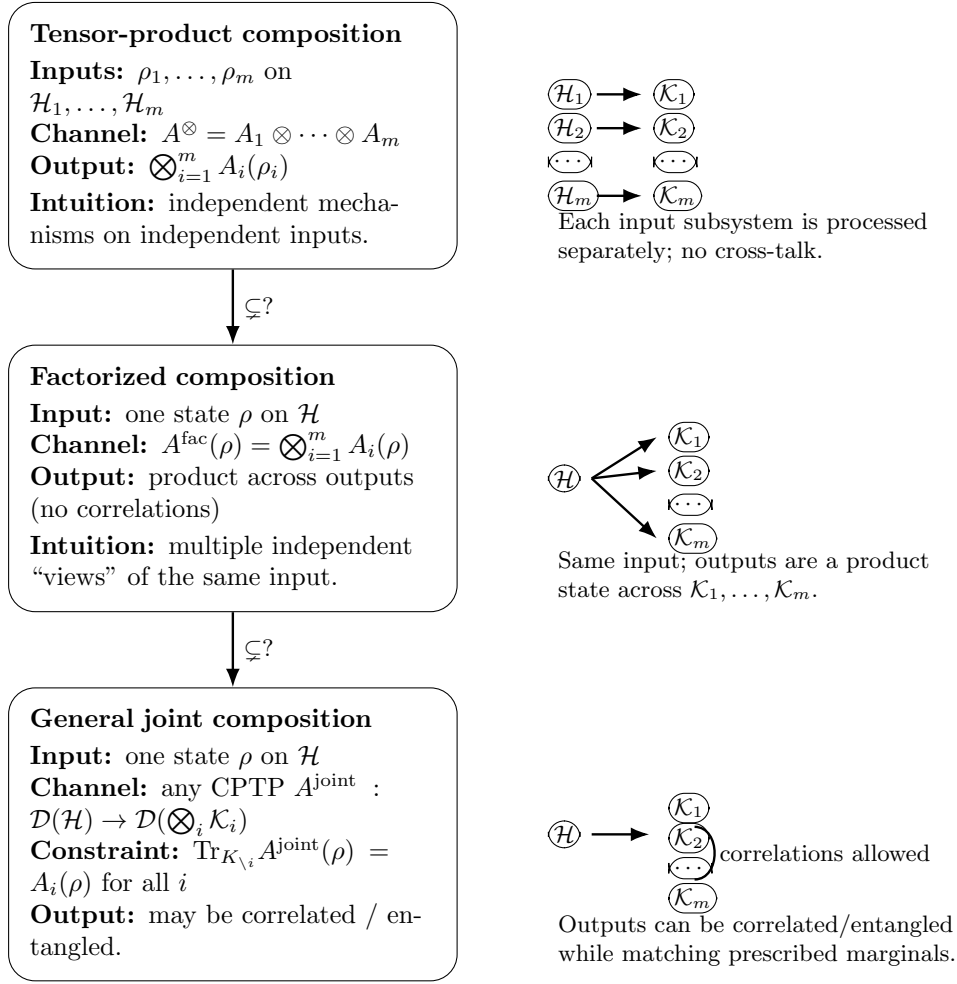

\subsection{Contributions}

This paper makes the following contributions.

\begin{itemize}
  \item \textbf{A hierarchy of quantum composition models.}
  We formalize and distinguish tensor-product, factorized, and general joint composition of quantum channels, and show that these form a strict hierarchy.
  This framework, illustrated in Figure~\ref{fig:composition-hierarchy-clean},
  isolates exactly where classical composition guarantees break down in the quantum setting.

  \item \textbf{Impossibility of basic composition for general joint channels.}
  We prove that even pure $(0,0)$-QDP mechanisms can fail to compose under correlated joint implementations.
  In particular, we show that, for $\varepsilon_1,\varepsilon_2,\delta_1,\delta_2\geq 0$ no general $(\varepsilon_1+\varepsilon_2,\delta_1+\delta_2)$-type composition theorem can hold for POVM-based approximate QDP without further restrictions.

  \item \textbf{Safe basic composition under restricted adversaries.}
  For tensor-product channels on product neighboring inputs, we prove basic composition guarantees against one-way LOCC adversaries, providing a clean quantum analogue of classical sequential composition under explicitly stated assumptions.

  \item \textbf{A quantum moments accountant.}
  We introduce a non-commutative analogue of the classical moments accountant based on the quantum privacy-loss operator and a matrix moment-generating function.
  We show that this accountant composes additively under tensor-product channels and product neighbors.

  \item \textbf{From operator moments to operational privacy.}
  We prove that bounds on the quantum moments accountant imply bounds on measured R\'enyi divergence, yielding operational $(\varepsilon,\delta)$-QDP guarantees against arbitrary POVMs.
  This provides an advanced-composition-style bound with the familiar $\sqrt{\sum_i \varepsilon_i^2 \log(1/\delta)}$ scaling.

  \item \textbf{Advanced composition for QDP.} We prove an advanced composition result for $k$ $(\varepsilon_i,\delta_i)$-QDP mechanisms under all possible adversaries for $\delta_i=0$, with the composed mechanism satisfying the privacy parameter $\sqrt{\sum_i \varepsilon_i^2 \log(1/\delta)}$ for $\delta \in (0,1)$. Moreover, we also prove an advanced composition result that even holds for $\delta_i \neq 0$, when the adversary is allowed local operations and joint classical post-processing.

  \item \textbf{Clarifying the quantum–classical gap.}
  Our results identify measurement incompatibility and the absence of a joint classical probability space as the fundamental obstacles to classical composition arguments, rather than a failure of tensor-product additivity at the channel level.
\end{itemize}

\section{Preliminaries and Definitions}

We collect notation, definitions, and structural distinctions that will be used throughout the paper.
Because composition behavior in the quantum setting depends sensitively on how channels are combined, how neighboring inputs are defined, and what adversarial measurements are permitted, we make these modeling choices explicit at the outset.
In particular, we distinguish several notions of multi-output composition that coincide classically but diverge sharply in the quantum setting.

\subsection{Notation and Conventions}

All Hilbert spaces are finite-dimensional.
For a Hilbert space $\mathcal{H}$, we write $\mathcal{D}(\mathcal{H})$ for the set of density operators on $\mathcal{H}$, i.e.,
positive semidefinite operators with unit trace.

\begin{definition}[Density operators]
Let $\mathcal{H}$ be a finite-dimensional Hilbert space. We write
\[
\mathcal{D}(\mathcal{H})
:= \bigl\{ \rho \in \mathsf{L}(\mathcal{H}) \;\big|\; 
\rho \succeq 0,\ \Tr(\rho)=1 \bigr\},
\]
where $\mathsf{L}(\mathcal{H})$ denotes the set of linear operators on $\mathcal{H}$.
Elements of $\mathcal{D}(\mathcal{H})$ are called \emph{density operators} or \emph{quantum states}.
\end{definition}

Given a composite system $\mathcal{H}_1 \otimes \cdots \otimes \mathcal{H}_m$, we write
$\mathrm{Tr}_{\mathcal{H}\setminus i}$ for the partial trace over all subsystems except $\mathcal{H}_i$.
For operators $X,Y$ we write $X \preceq Y$ to denote that $Y-X$ is positive semidefinite.

Throughout, logarithms are natural unless otherwise stated.
When defining operator expressions involving inverses or logarithms, we adopt the convention that the expression is $+\infty$ whenever the required support conditions fail.

\subsection{Channels}

Quantum channels are completely positive, trace-preserving (CPTP) maps between spaces of density operators.

\begin{definition}[Tensor-product channels]
Let $\mathcal{H}_1,\mathcal{H}_2, \ldots, \mathcal{H}_m$ and $\mathcal{K}_1,\mathcal{K}_2, \ldots, \mathcal{K}_m$ be finite-dimensional Hilbert
spaces. Let
\[
A_i:\mathcal{D}(\mathcal{H}_i)\to\mathcal{D}(\mathcal{K}_i),\qquad i\in\{1,2, \ldots,m\},
\]
be quantum channels (i.e., completely positive, trace-preserving maps).
The \emph{tensor-product channel} associated with $(A_1, \ldots, A_m)$ is the channel
\[
A_1 \otimes \cdots \otimes A_m:\mathcal{D}(\mathcal{H}_1\otimes\cdots\otimes\mathcal{H}_m)
\;\longrightarrow\;
\mathcal{D}(\mathcal{K}_1\otimes\cdots\otimes\mathcal{K}_m)
\]
defined by
\[
(A_1 \otimes \cdots \otimes A_m)(\rho_1\otimes\cdots\otimes\rho_m)
\;:=\;
A_1(\rho_1)\otimes\cdots\otimes A_m(\rho_m),
\]
and extended linearly to all inputs $\rho\in\mathcal{D}(\mathcal{H}_1\otimes\cdots\otimes\mathcal{H}_m)$.
{Equivalently, $A_1\otimes\cdots \otimes A_m$ is a CPTP map whose Kraus operators are all tensor products
of Kraus operators of $A_1, \ldots, A_m$.}

\end{definition}

\begin{definition}[Factorized channels]
Let $\mathcal{H}$ be a finite-dimensional Hilbert space and let
$\mathcal{K}_1,\ldots,\mathcal{K}_m$ be finite-dimensional Hilbert spaces.
A quantum channel
\[
A:\mathcal{D}(\mathcal{H})\to \mathcal{D}(\mathcal{K}_1\otimes\cdots\otimes\mathcal{K}_m)
\]
is called \emph{factorized} if there exist quantum channels
\[
A_j:\mathcal{D}(\mathcal{H})\to\mathcal{D}(\mathcal{K}_j),\qquad j=1,\ldots,m,
\]
such that
\[
A(\rho)=A_1(\rho)\otimes A_2(\rho)\otimes\cdots\otimes A_m(\rho)
\qquad\text{for all }\rho\in\mathcal{D}(\mathcal{H}).
\]
In this case, the $m$ outputs are independent given the input state, although they may be derived
from the same underlying input. 
\end{definition}

\begin{definition}[General joint channels]
Let $\mathcal{H}$ be a finite-dimensional Hilbert space and let
$\mathcal{K}_1,\ldots,\mathcal{K}_m$ be finite-dimensional Hilbert spaces.
A \emph{general joint channel} is any quantum channel
\[
A:\mathcal{D}(\mathcal{H})\to \mathcal{D}(\mathcal{K}_1\otimes\cdots\otimes\mathcal{K}_m)
\]
that is completely positive and trace-preserving, with no further structural restrictions.
Equivalently, a general joint channel may produce arbitrary correlations or entanglement among the
output subsystems $\mathcal{K}_1,\ldots,\mathcal{K}_m$, and need not admit any factorization or
tensor-product decomposition.
\end{definition}

\subsection{Models of Composition}

Classically, releasing multiple outputs of differentially private mechanisms implicitly defines a joint distribution over all outputs.
In the quantum setting, however, there is no unique joint state consistent with a collection of marginals, and different joint implementations can exhibit drastically different privacy behavior.

We therefore distinguish three composition models:

\begin{enumerate}
  \item \textbf{Tensor-product composition}, where independent channels act on independent input subsystems.
  \item \textbf{Factorized composition}, where multiple channels act on the same input but the joint output is constrained to be a product state.
  \item \textbf{General joint composition}, where only the marginal behavior of each output is fixed, and the joint output may be arbitrarily correlated or entangled.
\end{enumerate}

While these notions might coincide in the classical setting, they form a hierarchy in the quantum setting.
Much of the subtlety of quantum composition arises from the gap between factorized and general joint composition.

\begin{definition}[Composition into tensor-product channels]
Let
\[
A_i:\mathcal{D}(\mathcal{H}_i)\to\mathcal{D}(\mathcal{K}_i),
\qquad i=1,\ldots,m,
\]
be quantum channels acting on (possibly distinct) input systems.
The \emph{composition into a tensor-product channel} is the joint channel
\[
A^{\otimes}
\;:=\;
A_1\otimes\cdots\otimes A_m
\;:\;
\mathcal{D}(\mathcal{H}_1\otimes\cdots\otimes\mathcal{H}_m)
\to
\mathcal{D}(\mathcal{K}_1\otimes\cdots\otimes\mathcal{K}_m),
\]
defined by
\[
A^{\otimes}(\rho_1\otimes\cdots\otimes\rho_m)
=
A_1(\rho_1)\otimes\cdots\otimes A_m(\rho_m),
\]
and extended linearly to all joint inputs.
This models the composition of \emph{independent mechanisms acting on independent inputs}.
\label{def:tensor-product-composition}
\end{definition}

{Note that tensor-product composition is the quantum analogue of classical parallel composition.}

\begin{definition}[Composition into factorized channels]
Let
\[
A_i:\mathcal{D}(\mathcal{H})\to\mathcal{D}(\mathcal{K}_i),
\qquad i=1,\ldots,m,
\]
be quantum channels sharing the same input system.
The \emph{composition into a factorized channel} is the joint channel
\[
A^{\mathrm{fac}}
\;:\;
\mathcal{D}(\mathcal{H})
\to
\mathcal{D}(\mathcal{K}_1\otimes\cdots\otimes\mathcal{K}_m)
\]
defined by
\[
A^{\mathrm{fac}}(\rho)
=
A_1(\rho)\otimes\cdots\otimes A_m(\rho),
\qquad \rho\in\mathcal{D}(\mathcal{H}).
\]
In this composition, all outputs are conditionally independent given the same underlying input
state, but each output may reveal different information about that input.
\label{def:factorized-composition}
\end{definition}

\begin{definition}[Composition into general joint channels]
Let
\[
A_i:\mathcal{D}(\mathcal{H})\to\mathcal{D}(\mathcal{K}_i),
\qquad i=1,\ldots,m,
\]
be quantum channels with a common input space.
A \emph{composition into a general joint channel} is any quantum channel
\[
A^{\mathrm{joint}}
\;:\;
\mathcal{D}(\mathcal{H})
\to
\mathcal{D}(\mathcal{K}_1\otimes\cdots\otimes\mathcal{K}_m)
\]
whose marginals coincide with the individual channels, i.e.,
\[
\Tr_{\mathcal{K}_{\setminus i}}\!\big(A^{\mathrm{joint}}(\rho)\big)
=
A_i(\rho)
\quad\text{for all }\rho\in\mathcal{D}(\mathcal{H})\text{ and all }i,
\]
where $\mathcal{K}_{\setminus i}:=\bigotimes_{j\neq i}\mathcal{K}_j$.
Such a composition may introduce arbitrary classical or quantum correlations (including
entanglement) among the outputs and strictly generalizes both tensor-product and factorized
compositions.
\label{def:joint-composition}
\end{definition}

\begin{remark}[Tensor-product vs. factorized composition]
\label{rem:tensor-vs-factorized}
Tensor-product and factorized composition impose constraints along different
axes. Tensor-product composition restricts how a joint channel is \emph{implemented},
requiring independent local mechanisms, but allows correlated or entangled outputs
on entangled inputs. In contrast, factorized composition restricts the \emph{output
structure}, requiring a product state for every input, but does not require that
the channel decompose as a tensor product of local channels.

On product inputs $\rho=\rho_1\otimes\cdots\otimes\rho_m$, the output of a
tensor-product channel is product:
\[
(A_1\otimes\cdots\otimes A_m)(\rho)
=
A_1(\rho_1)\otimes\cdots\otimes A_m(\rho_m),
\]
but this coincidence does not extend to entangled inputs.
\end{remark}

Below,
we provide a few examples of the different channels we apply composition into.

\subsubsection{Examples}

\begin{example}[Example of tensor-product channel]
Let $\mathcal{H}_1=\mathcal{H}_2=\mathbb{C}^2$ and
$\mathcal{K}_1=\mathcal{K}_2=\mathbb{C}^2$.
Let $A_1$ and $A_2$ be single-qubit depolarizing channels,
\[
A_i(\rho) = (1-p_i)\rho + p_i \frac{I}{2}, \qquad i\in\{1,2\}.
\]
Then the tensor-product channel $A_1\otimes A_2$ acts on two-qubit inputs as
\[
(A_1\otimes A_2)(\rho_1\otimes\rho_2)
= A_1(\rho_1)\otimes A_2(\rho_2),
\]
and introduces no correlations between the two output qubits beyond those already present in the
input. If the input is a product state, the output is also a product state.
\label{ex:tensor-product}
\end{example}

\begin{example}[Example of factorized channel]
{
Let $\mathcal H=\mathbb C^2$ and let $\mathcal K_1=\mathbb C^2$, $\mathcal K_2=\mathbb C^m$.
Define
\[
\mathcal A_1(\rho)=\rho,
\qquad
\mathcal A_2(\rho)=\sum_x \mathrm{Tr}(M_x\rho)\,|x\rangle\!\langle x|,
\]
where $\{M_x\}$ is a POVM.
Then the joint channel
\[
\mathcal A(\rho)=\mathcal A_1(\rho)\otimes\mathcal A_2(\rho)
\]
is a valid factorized channel.
The two outputs are independent given the input state, yet derived from the same underlying quantum system.
}

\label{ex:factorized}
\end{example}

\begin{example}[Example of general joint channels]
Let $\mathcal{H}=\mathbb{C}^2$ and $\mathcal{K}_1=\mathcal{K}_2=\mathbb{C}^2$.
Define a channel $A$ that appends an ancilla qubit initialized to $\ket{0}$ and applies a CNOT gate:
\[
A(\rho)=\mathrm{CNOT}\,(\rho\otimes\ket{0}\!\bra{0})\,\mathrm{CNOT}^\dagger.
\]
The resulting joint output state on $\mathcal{K}_1\otimes\mathcal{K}_2$ is generally entangled.
This channel is a valid joint channel but is neither tensor-product nor factorized, since the two
output subsystems are correlated even when the input is pure.
\label{ex:general-joint}
\end{example}

\begin{remark}[Distinguishing the three notions]
Tensor-product channels describe independent mechanisms acting on independent inputs.
Factorized channels describe multiple independent outputs derived from the same input.
General joint channels allow arbitrary correlations or entanglement across outputs.
These distinctions are crucial when reasoning about composition and privacy guarantees in the
quantum setting.
\end{remark}

This leads to the following proposition:

\begin{proposition}[Containment relations between composition models]
\label{prop:composition-containment}
Let $m\ge 2$ and let $\mathcal{H},\mathcal{H}_1,\ldots,\mathcal{H}_m$ and
$\mathcal{K}_1,\ldots,\mathcal{K}_m$ be finite-dimensional Hilbert spaces.

\begin{enumerate}
\item\textbf{Tensor-product implies factorization on product inputs.}
Let $A^\otimes := A_1 \otimes \cdots \otimes A_m$ be a tensor-product
composition of channels $A_i : \mathcal D(H_i) \to \mathcal D(K_i)$.
Then for every product input state
\[
\rho = \rho_1 \otimes \cdots \otimes \rho_m,
\]
the output of $A^\otimes$ is factorized:
\[
A^\otimes(\rho)
=
A_1(\rho_1)\otimes\cdots\otimes A_m(\rho_m).
\]
Equivalently, when restricted to product inputs, a tensor-product channel
coincides with a factorized channel.

\item \textbf{Factorized $\subseteq$ general joint.}
Every factorized composition is a special case of general joint composition.
Indeed, if
\[
A^{\mathrm{fac}}(\rho)=A_1(\rho)\otimes\cdots\otimes A_m(\rho),
\]
then $A^{\mathrm{fac}}$ is a joint channel whose marginals are exactly the $A_i$.

\item \textbf{Strictness of inclusions.}
Both inclusions above are strict:
\begin{itemize}
\item There exist factorized channels that are not tensor-product channels (e.g., multiple outputs
derived independently from the same input).
\item There exist general joint channels that are not factorized (e.g., channels that produce
entangled outputs while preserving the same marginals).
\end{itemize}
\end{enumerate}
\end{proposition}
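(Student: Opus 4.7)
The plan is to verify each part by unpacking the relevant definitions and supplying explicit witnesses for the strictness claims. Part 1 is essentially a restatement of the definition of a tensor-product channel: since $A^\otimes$ acts on simple tensors by $A^\otimes(\rho_1\otimes\cdots\otimes\rho_m) = A_1(\rho_1)\otimes\cdots\otimes A_m(\rho_m)$, its action on any product input is already in the factorized form of Definition \ref{def:factorized-composition}. Nothing beyond invoking the definition is required here.

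For Part 2, I would verify the marginal condition directly. Applying partial traces to $A^{\mathrm{fac}}(\rho) = A_1(\rho)\otimes\cdots\otimes A_m(\rho)$ and using the identity $\Tr_{B}(X\otimes Y) = X\cdot\Tr(Y)$ iteratively gives $\Tr_{\calK_{\setminus i}}\bigl(A_1(\rho)\otimes\cdots\otimes A_m(\rho)\bigr) = A_i(\rho)\cdot\prod_{j\neq i}\Tr(A_j(\rho))$. Trace preservation of each $A_j$ yields $\Tr(A_j(\rho)) = \Tr(\rho) = 1$, so the product collapses and the $i$-th marginal equals $A_i(\rho)$ as required by Definition \ref{def:joint-composition}.

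For Part 3, I would exhibit explicit witnesses for each strictness claim. A factorized channel has a \emph{single} input system, so it cannot literally be of the form $A_1\otimes\cdots\otimes A_m$ acting on a multipartite input; Example \ref{ex:factorized} makes this structural mismatch concrete, with the two output subsystems sharing the same underlying physical input rather than being processed from independent input registers. For the joint-but-not-factorized case, Example \ref{ex:general-joint} suffices: on the pure input $\ket{+}\!\bra{+}$, the CNOT-with-ancilla channel outputs a maximally entangled Bell state, whereas any factorized channel always outputs a product state on every input, giving an immediate contradiction.

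The main conceptual subtlety is reconciling the different input-space conventions across the three models, since tensor-product composition takes a multipartite input while factorized and joint composition take a single input. I would address this by invoking Part 1 as the canonical identification: on product inputs a tensor-product channel realizes factorized behavior, and the hierarchy in the proposition is to be read modulo this identification. This makes the meaning of ``strictly contained'' precise: every tensor-product composition restricted to product inputs is factorized, every factorized channel is joint, and the witnesses above show these inclusions cannot be reversed.
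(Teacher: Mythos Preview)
Your proposal is correct and follows essentially the same approach as the paper, which simply cites Definitions~\ref{def:tensor-product-composition}--\ref{def:joint-composition} and Examples~\ref{ex:tensor-product}--\ref{ex:general-joint}. You have fleshed out the partial-trace verification for Part~2 and the entanglement argument for Part~3 more explicitly than the paper does, and your remark on reconciling the input-space conventions is a helpful clarification the paper leaves implicit, but the underlying strategy is identical.
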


\begin{proof}
Follows from Definition~\ref{def:tensor-product-composition},
Definition~\ref{def:factorized-composition}, Definition~\ref{def:joint-composition} and Example~\ref{ex:tensor-product},Example~\ref{ex:factorized},Example~\ref{ex:general-joint}.
\end{proof}

\subsection{Operational Quantum Differential Privacy}

In the classical setting, differential privacy is defined in terms of the output distributions of randomized algorithms.
In the quantum setting, the output of a mechanism is a quantum state, and privacy must therefore be defined operationally in terms of the statistics produced by measurements.
This leads naturally to a definition that quantifies over all POVM effects applied to the channel output~\citep{QDP_computation17,hirche2022quantum}:

\begin{definition}[(Approximate) Quantum Differential Privacy]
\label{def:qdp-povm}
A quantum channel (CPTP map) $A$ is \emph{$(\varepsilon,\delta)$-QDP} if for all neighboring
input states $\rho\sim\sigma$ and for every measurement operator $M$
(i.e., satisfying $0\preceq M\preceq I$),

\begin{equation}
\Tr\!\big(MA(\rho)\big)\ \le\ e^{\varepsilon}\Tr\!\big(MA(\sigma)\big)\ +\ \delta.
\end{equation}
\end{definition}

Importantly, this operational definition captures adversaries with unrestricted measurement power.
While alternative notions of quantum privacy can be defined (e.g., by restricting measurements or comparing states directly via trace distance~\cite{NuradhaGW24}), the general quantum differential privacy definition most directly mirrors the classical adversarial model and is the strongest notion considered in this work.

\subsubsection{Neighboring Inputs}

We use the standard notion of neighboring inputs adapted to quantum states.
Two density operators $\rho,\sigma \in \mathcal{D}(\mathcal{H})$ are said to be \emph{neighbors}, denoted $\rho \sim \sigma$, if they differ in the data of a single individual or record.
The precise definition is application-dependent and left abstract in this work; all results hold for any fixed neighboring relation.

When considering multi-system inputs, we distinguish between:
\begin{itemize}
  \item \emph{Product neighbors}, where $\rho=\bigotimes_i \rho_i$ and $\sigma=\bigotimes_i \sigma_i$ with $\rho_i \sim \sigma_i$ for each subsystem; and
  \item \emph{General neighbors}, which may be entangled or classically correlated across subsystems.
\end{itemize}
This distinction plays a critical role in our composition results and impossibility theorems.

\subsection{Roadmap}

The distinctions introduced above are not merely definitional.
In Section~\ref{sec:no-basic-comp-qdp}, we show that basic composition fails for approximate QDP under general joint composition.
In contrast, Sections~\ref{sec:basic-comp-product} and~\ref{sec:qma} show that clean composition guarantees can be recovered for tensor-product channels acting on product neighboring inputs, culminating in an advanced-composition-style bound via the quantum moments accountant.
First, we discuss some additional related work.

\section{Related Work}

\paragraph{Classical differential privacy and composition.}
Composition theorems are central to classical differential privacy~\citep{KairouzOhVi17}.
Basic composition and advanced composition results were established early in the literature, culminating in the moments accountant framework of Abadi et al.~\citep{AbadiChGoMcMiTaZh16} for tracking privacy loss in iterative algorithms such as DP-SGD.
R\'enyi differential privacy (RDP) and its variants further systematized composition by working directly with R\'enyi divergences~\citep{mironov2017renyi}.

\paragraph{Quantum differential privacy.} Quantum differential privacy (QDP) was first introduced in \citep{QDP_computation17}, and several variants of QDP considering distinguishability under all possible measurements (adversaries) have been studied in~\citep{AaronsonR19,hirche2022quantum,angrisani2023unifying,NuradhaGW24, angrisani_localModel25,guan2024optimal,gallage2025theory}. Furthermore, a variant that generalizes QDP by encoding domain knowledge and measurement capabilities (practical possibilities of the adversary) of the systems, known as quantum pufferfish privacy (QPP), was introduced in~\citep{NuradhaGW24} and further studied in~\citep{nuradha2025MeasuredHS} with the inspiration from its classical variants~\citep{KM14, nuradha2022pufferfishJ}. This variant highlights, in some cases, how certain mechanisms are private when we consider the limitations of the measurements that can be performed, in contrast to those mechanisms being non-private when all measurements are allowed.
Prior works have explored relationships between QDP mechanisms, quantum hypothesis testing, and quantum R\'enyi divergences, as well as connections to classical DP under commuting states~\citep{hirche2022quantum,NuradhaGW24,Farhad_QP_HT,angrisani_localModel25,nuradha2024contraction,Christoph2024sample,dasgupta2025quantum}. 

In terms of composition of private mechanisms, for the variant of quantum local differential privacy (Definition~\ref{def:qdp-povm} for $\delta=0$ and neighbors declared as $\rho \sim \sigma$ for all pairs of distinct states), (basic) composition of several such private mechanisms when allowed measurements are having locally motivated structures has been studied in~\citep{guan2024optimal}, and in~\cite{hirche2022quantum,NuradhaGW24} for QDP and QPP, some basic composition results were provided. {In the setting where one applies private channels one after the other sequentially, sequential composition results for QDP have been studied using strong data-processing inequalities \cite{hirche2022quantum,  nuradha2025nonLinear}. 
However, more generally, composition guarantees in the quantum setting remain far less understood in a unified way, particularly in the presence of entanglement and correlated outputs.}

\paragraph{R\'enyi divergences in quantum information.}
Quantum R\'enyi divergences, including the Petz and sandwiched variants~\citep{MDSFT13,BSW15,RSB24}, play a central role in quantum information theory~\citep{DW18}.
The sandwiched R\'enyi divergence satisfies a data-processing inequality and has been used to derive operational guarantees in a variety of settings.
In contrast, the MMGF-induced divergence we consider is tailored to moment accounting and exact additivity, rather than monotonicity under channels.

\paragraph{Limits of composition in non-classical settings.}
Impossibility results related to composition have appeared in other non-classical or non-i.i.d. settings, where joint distributions or correlated releases invalidate union-bound-based arguments~\citep{KairouzOhVi17}.
Our no-go results for general joint quantum channels fit squarely into this theme and provide a concrete quantum-mechanical explanation rooted in measurement incompatibility.

\paragraph{Our contribution in context.}
Relative to prior work, this paper provides the first systematic treatment of when classical-style composition can and cannot be recovered for quantum differential privacy.
By explicitly separating structural assumptions on channels, neighboring inputs, and adversaries, we reconcile negative results with positive composition theorems and introduce a moments-accountant-style framework that is both quantum-native and operationally meaningful.

\section{Why Basic Composition for Approximate DP Fails for General Joint Channels}
\label{sec:no-basic-comp-qdp}

\paragraph{Basic composition (classical).}
In the classical setting, if $M_1$ is $(\varepsilon_1,\delta_1)$-DP and $M_2$ is
$(\varepsilon_2,\delta_2)$-DP, then the joint release $(M_1,M_2)$ is
$(\varepsilon_1+\varepsilon_2,\delta_1+\delta_2)$-DP. The proof relies on representing
privacy loss as a scalar random variable and applying a union bound to the ``$\delta$-bad'' events.

\subsection{Failure of Basic Composition for Quantum DP}

We now show that the classical basic composition theorem does not extend to
Definition~\ref{def:qdp-povm}.

\begin{theorem}[No-go for basic composition under POVM-QDP]
\label{thm:no-go-basic-comp}
There exist quantum channels $A_1,A_2$ and neighboring inputs $\rho\sim\sigma$ such that:
\begin{enumerate}
  \item each of $A_1$ and $A_2$ is $(0,0)$-QDP, yet
  \item there exists a composition of $A_1, A_2$ into a joint channel (Definition~\ref{def:joint-composition}) that is not $(\varepsilon,\delta)$-QDP for any $\delta<1$.
\end{enumerate}
Consequently, no general rule of the form
\[
(\varepsilon_1,\delta_1)\text{-QDP} + (\varepsilon_2,\delta_2)\text{-QDP}
\Longrightarrow (\varepsilon_1+\varepsilon_2,\delta_1+\delta_2)\text{-QDP}
\]
can hold for POVM-based quantum differential privacy.
\end{theorem}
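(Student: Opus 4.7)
The plan is to exhibit a small-dimensional counterexample in which both marginal channels are completely constant — hence trivially $(0,0)$-QDP — while a legitimate joint channel with those marginals encodes the input into an entangled output whose two possible realizations are mutually orthogonal. The guiding intuition is that constant reduced marginals impose essentially no constraint on the joint channel: many joint states (in particular, all four Bell states) have maximally mixed single-system marginals, so one can hide all of the $\rho$-versus-$\sigma$ distinguishability in correlations that are invisible to either marginal but perfectly accessible to a joint measurement.

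Concretely, I would take $\calH = \mathbb{C}^2$, fix the neighboring inputs $\rho = \ket{0}\!\bra{0}$ and $\sigma = \ket{1}\!\bra{1}$, and let $A_1 = A_2$ be the fully depolarizing qubit channel $\tau \mapsto I/2$. These are $(0,0)$-QDP by inspection, since $A_i(\rho) = A_i(\sigma) = I/2$ and hence $\Tr(M A_i(\rho)) = \Tr(M A_i(\sigma))$ for every $0 \preceq M \preceq I$. For the joint channel I would use the measure-and-prepare map
\[
A^{\mathrm{joint}}(\tau) \;:=\; \bra{0}\tau\ket{0}\,\ket{\Phi^+}\!\bra{\Phi^+} \;+\; \bra{1}\tau\ket{1}\,\ket{\Phi^-}\!\bra{\Phi^-},
\]
where $\ket{\Phi^\pm} := (\ket{00} \pm \ket{11})/\sqrt{2}$. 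This is manifestly CPTP (indeed entanglement-breaking), being a convex combination of prepare-a-fixed-state maps conditioned on the outcome of a $Z$-basis measurement of the input. A direct partial-trace computation shows that both one-qubit marginals of $A^{\mathrm{joint}}(\tau)$ equal $I/2$ for every input $\tau$ — each Bell state has maximally mixed marginals and the two coefficients sum to $\Tr(\tau)=1$ — so $A^{\mathrm{joint}}$ qualifies as a composition of $A_1$ and $A_2$ in the sense of Definition~\ref{def:joint-composition}.

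It then remains to evaluate the joint channel on the chosen neighbors and read off the distinguishability. By construction, $A^{\mathrm{joint}}(\rho) = \ket{\Phi^+}\!\bra{\Phi^+}$ and $A^{\mathrm{joint}}(\sigma) = \ket{\Phi^-}\!\bra{\Phi^-}$, which are orthogonal. Taking the POVM element $M := \ket{\Phi^+}\!\bra{\Phi^+}$ yields $\Tr(M\, A^{\mathrm{joint}}(\rho)) = 1$ and $\Tr(M\, A^{\mathrm{joint}}(\sigma)) = 0$, so the $(\varepsilon,\delta)$-QDP condition of Definition~\ref{def:qdp-povm} reduces to $1 \le e^{\varepsilon}\cdot 0 + \delta$, forcing $\delta \ge 1$ for every finite $\varepsilon$. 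Consequently no bound of the form $(\varepsilon_1 + \varepsilon_2,\,\delta_1 + \delta_2)$ derivable from the $(0,0)$-QDP marginals can hold for $A^{\mathrm{joint}}$, which delivers both conclusions of the theorem.

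I do not anticipate a serious technical obstacle. The only step requiring genuine care — and which is really the conceptual crux of the construction rather than a hurdle — is verifying that $A^{\mathrm{joint}}$ has the advertised marginals for every input state, not merely on the pair $(\rho,\sigma)$, so that it is genuinely a joint implementation of $A_1$ and $A_2$ in the sense of Definition~\ref{def:joint-composition}. This reduces to the observation above that each of $\ket{\Phi^\pm}\!\bra{\Phi^\pm}$, and hence every convex combination of the two, has both one-qubit marginals equal to $I/2$.
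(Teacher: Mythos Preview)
Your proposal is correct and essentially identical to the paper's proof: both use the neighboring pair $\ket{0}\!\bra{0}\sim\ket{1}\!\bra{1}$, a joint channel that sends these to the orthogonal Bell states $\ket{\Phi^+}\!\bra{\Phi^+}$ and $\ket{\Phi^-}\!\bra{\Phi^-}$, and the measurement $M=\ket{\Phi^+}\!\bra{\Phi^+}$ to force $\delta\ge 1$. The only cosmetic difference is that the paper defines the joint channel first and sets $A_i$ to be its marginals, whereas you fix $A_1=A_2$ as the fully depolarizing channel and then verify your measure-and-prepare $A^{\mathrm{joint}}$ realizes those marginals; your version is arguably more explicit, since it specifies the joint channel on all inputs and checks the marginal condition of Definition~\ref{def:joint-composition} for every $\tau$.
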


\begin{proof}
Let the database be a single bit $b\in\{0,1\}$ encoded by orthogonal states
$\rho_0=\ket{0}\!\bra{0}$ and $\rho_1=\ket{1}\!\bra{1}$, which we declare neighboring.

Define a joint channel $A_{12}$ with two-qubit output systems $A$ and $B$ by
\[
A_{12}(\rho_0) = \ket{\Phi^+}\!\bra{\Phi^+},
\qquad
A_{12}(\rho_1) = \ket{\Phi^-}\!\bra{\Phi^-},
\]
where $\ket{\Phi^\pm} := \frac{1}{\sqrt{2}}(\ket{00}\pm\ket{11})$ are Bell states.

Define the individual mechanisms by taking marginals:
\[
A_1 := \Tr_B \circ A_{12},
\qquad
A_2 := \Tr_A \circ A_{12}.
\]
Since
\[
\Tr_B(\ket{\Phi^+}\!\bra{\Phi^+}) = \Tr_B(\ket{\Phi^-}\!\bra{\Phi^-}) = \tfrac{I}{2},
\]
we have $A_1(\rho_0)=A_1(\rho_1)$, and similarly $A_2(\rho_0)=A_2(\rho_1)$.
Hence for every POVM effect $M$,
\[
\Tr(MA_i(\rho_0))=\Tr(MA_i(\rho_1)), \qquad i\in\{1,2\},
\]
so both $A_1$ and $A_2$ are $(0,0)$-QDP.

However, the joint outputs $A_{12}(\rho_0)$ and $A_{12}(\rho_1)$ are orthogonal.
Let $M=\ket{\Phi^+}\!\bra{\Phi^+}$. Then
\[
\Tr(MA_{12}(\rho_0))=1, \qquad \Tr(MA_{12}(\rho_1))=0.
\]
If $A_{12}$ were $(\varepsilon,\delta)$-QDP with $\delta<1$, Definition~\ref{def:qdp-povm}
would imply $1\le\delta$, a contradiction.
\end{proof}

\begin{remark}[Clarification about Theorem~\ref{thm:no-go-basic-comp}]
The joint channel refers to the correlated implementation (see Definition~\ref{def:joint-composition}) given by the
single joint channel $A_{12}$ outputting both subsystems, not to the independent tensor product
channel $A_1\otimes A_2$ (see Definition~\ref{def:tensor-product-composition}).
\end{remark}

\paragraph{Source of the composition failure}
The failure of composition in Theorem~\ref{thm:no-go-basic-comp} does \emph{not}
arise from entangled neighboring inputs. The neighboring states
$\rho_0 = \ket{0}\!\bra{0}$ and $\rho_1 = \ket{1}\!\bra{1}$ are single-qubit,
orthogonal, and unentangled.
Rather, the pathology arises from allowing a general joint channel whose
marginals are fixed but whose joint action introduces correlations between
outputs. Although each marginal channel is perfectly private, the joint
channel can amplify distinguishability through correlated outputs.

\subsection{Why This Phenomenon Is Quantum}

Classical approximate DP relies on two properties:
\begin{enumerate}
  \item \emph{Existence of a single joint probability space} supporting all outcomes and
        all privacy-loss events.
  \item \emph{Scalar privacy loss}: the likelihood ratio is a random variable, and $\delta$
        controls the probability of large deviations via a union bound.
\end{enumerate}

In contrast, Definition~\ref{def:qdp-povm} quantifies over all POVMs \emph{after} the channel.
Different POVMs are generally incompatible and cannot be realized jointly.
As a result, there is no single outcome space on which ``bad events'' can be union-bounded.
This lack of joint measurability is the same structural feature responsible for Bell/CHSH
violations in quantum mechanics~\citep{NC00,KSV02}.

Basic composition for approximate DP fundamentally relies on classical probabilistic structure.
Under POVM-based quantum differential privacy, this structure is absent: local indistinguishability
does not imply joint indistinguishability. Consequently, classical $(\varepsilon_1+\varepsilon_2,
\delta_1+\delta_2)$ composition has no general quantum analogue without additional assumptions.

The impossibility of basic $(\varepsilon_1+\varepsilon_2,\delta_1+\delta_2)$ composition
for approximate QDP stems from measurement incompatibility and the absence of a joint
classical probability space, rather than from a failure of tensor-product additivity
at the channel level.

\section{Basic Composition for Product Neighbors and Tensor-Product Channels}
\label{sec:basic-comp-product}

\subsection{Setup}

Let $\mathcal{H}_1,\mathcal{H}_2$ be finite-dimensional Hilbert spaces. For $i\in\{1,2\}$,
let $A_i:\mathcal{D}(\mathcal{H}_i)\to\mathcal{D}(\mathcal{K}_i)$ be quantum channels (CPTP maps),
and define the tensor-product channel composition (as in Definition~\ref{def:tensor-product-composition})
\[
A_{12} := A_1\otimes A_2:\mathcal{D}(\mathcal{H}_1\otimes\mathcal{H}_2)\to
\mathcal{D}(\mathcal{K}_1\otimes\mathcal{K}_2).
\]

We use the POVM-based approximate quantum differential privacy definition (Definition~\ref{def:qdp-povm}).

\paragraph{Product-neighbor model.}
We say that $(\rho,\sigma)$ are \emph{product neighbors} if $\rho=\rho_1\otimes\rho_2$,
$\sigma=\sigma_1\otimes\sigma_2$, and $\rho_i\sim\sigma_i$ for each $i\in\{1,2\}$.

\subsection{A Composition Theorem}

The clean ``$\delta$-adds'' basic composition proof in the classical setting relies on the
fact that a joint event can be decomposed into conditional events on each release (equivalently,
one can implement any test sequentially without changing the underlying probability space).
In the quantum setting, this exact argument goes through \emph{verbatim} provided the
adversary's test on the joint output is \emph{separable} (in particular, any LOCC test is
separable). For \emph{arbitrary} global POVMs, the corresponding statement is not generally
known to hold under the POVM-based definition, and the classical proof technique does not
directly apply.

\begin{definition}[One-way LOCC two-outcome tests]
A two-outcome test on $K_1\otimes K_2$ is one-way LOCC ($K_1\to K_2$) if it can be implemented as:
\begin{itemize}
\item measure $K_1$ with a POVM $\{E_t\}_t$,
\item on outcome $t$, measure $K_2$ with the POVM $\{M_{2,t}, I- M_{2,t}\}$ such that $0\preceq M_{2,t}\preceq I$ and accept iff it accepts.
\end{itemize}
Equivalently, its acceptance probability on product states satisfies
$\Tr(M\,\xi_1\otimes\xi_2)=\sum_t \Tr(E_t\xi_1)\Tr(M_{2,t}\xi_2)$.
\label{def:one-way-locc}
\end{definition}

\begin{theorem}[Tensor-product composition on product neighbors against one-way LOCC tests]
\label{thm:comp-oneway-locc-clean}
Let $A_i:\mathcal{D}(H_i)\to\mathcal{D}(K_i)$ be quantum channels for $i\in\{1,2\}$.
Assume that $A_i$ is $(\varepsilon_i,\delta_i)$-QDP:
for all neighboring
$\rho_i\sim\sigma_i$ and all measurement operators $0\preceq M_i\preceq I$ on $K_i$,
\[
\Tr(M_iA_i(\rho_i)) \le e^{\varepsilon_i}\Tr(M_iA_i(\sigma_i))+\delta_i.
\]
Fix any \emph{product} neighboring pair
\[
\rho=\rho_1\otimes\rho_2,\qquad \sigma=\sigma_1\otimes\sigma_2,\qquad \rho_i\sim\sigma_i.
\]
Let $A_{12}:=A_1\otimes A_2$. Then for every one-way LOCC ($K_1\to K_2$) two-outcome test on
$K_1\otimes K_2$ in the sense of Definition~\ref{def:one-way-locc} (equivalently, for every measurement operator $0\preceq M\preceq I$
admitting the factorization in \eqref{eq:oneway-factorization} below), we have
\begin{equation}
\label{eq:oneway-comp-bound}
\Tr\!\big(M\,A_{12}(\rho)\big)
\;\le\;
e^{\varepsilon_1+\varepsilon_2}\Tr\!\big(M\,A_{12}(\sigma)\big)
\;+\; e^{\varepsilon_2}\delta_1 \;+\; \delta_2.
\end{equation}
By swapping the roles of the two subsystems (i.e., using one-way LOCC tests $K_2\to K_1$),
we also obtain the symmetric bound
\[
\Tr\!\big(M\,A_{12}(\rho)\big)
\;\le\;
e^{\varepsilon_1+\varepsilon_2}\Tr\!\big(M\,A_{12}(\sigma)\big)
\;+\; \delta_1 \;+\; e^{\varepsilon_1}\delta_2.
\]
Consequently, against one-way LOCC adversaries (in either direction) the tensor-product channel
$A_1\otimes A_2$ is $(\varepsilon_1+\varepsilon_2,\delta_{\mathrm{comp}})$-QDP on product neighbors
for
\[
\delta_{\mathrm{comp}}:=\min\{\,e^{\varepsilon_2}\delta_1+\delta_2,\ \delta_1+e^{\varepsilon_1}\delta_2\,\}.
\]
\end{theorem}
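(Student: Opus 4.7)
The plan is to exploit the one-way LOCC factorization to peel off the two privacy guarantees one at a time, in the order $A_2$ then $A_1$, absorbing the second-stage measurement weights into a valid first-stage effect.

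Concretely, a one-way LOCC ($K_1\to K_2$) test acting on a product state $A_1(\rho_1)\otimes A_2(\rho_2)$ has acceptance probability
\[
\Tr\!\big(M\,A_1(\rho_1)\otimes A_2(\rho_2)\big)
= \sum_t \Tr\!\big(E_t\,A_1(\rho_1)\big)\,\Tr\!\big(M_{2,t}\,A_2(\rho_2)\big),
\]
where $\{E_t\}$ is a POVM on $K_1$ and each $M_{2,t}$ satisfies $0\preceq M_{2,t}\preceq I$. First I would apply the $(\varepsilon_2,\delta_2)$-QDP of $A_2$ to each inner factor with the neighboring pair $\rho_2\sim\sigma_2$, obtaining $\Tr(M_{2,t}A_2(\rho_2)) \le e^{\varepsilon_2}\Tr(M_{2,t}A_2(\sigma_2)) + \delta_2$. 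The $\delta_2$ term, after summing against $\Tr(E_t A_1(\rho_1))$, collapses to $\delta_2$ because $\sum_t E_t = I$ and $A_1(\rho_1)$ has unit trace.

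The main conceptual step is the second stage. Define the weighted effect
\[
E \;:=\; \sum_t \Tr\!\big(M_{2,t}\,A_2(\sigma_2)\big)\,E_t,
\]
and observe that the scalars $f(t) := \Tr(M_{2,t}A_2(\sigma_2))$ lie in $[0,1]$, so $0\preceq E \preceq \sum_t E_t = I$. Thus $E$ is a legitimate POVM effect on $K_1$, and applying the $(\varepsilon_1,\delta_1)$-QDP of $A_1$ gives $\Tr(E\,A_1(\rho_1)) \le e^{\varepsilon_1}\Tr(E\,A_1(\sigma_1)) + \delta_1$. A direct expansion of $\Tr(E\,A_1(\sigma_1))$ recovers exactly $\Tr(M\,A_{12}(\sigma))$, so multiplying through by $e^{\varepsilon_2}$ and combining with the previous step yields the claimed bound
\[
\Tr(M\,A_{12}(\rho))
\;\le\; e^{\varepsilon_1+\varepsilon_2}\Tr(M\,A_{12}(\sigma))
\;+\; e^{\varepsilon_2}\delta_1 + \delta_2.
\]

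The symmetric bound follows by repeating the argument with the peeling order reversed: apply $A_1$'s guarantee first to each $\Tr(E_t A_1(\rho_1))$ term, then absorb the resulting weights $\Tr(E_t A_1(\sigma_1))\in[0,1]$ into a valid POVM element on $K_2$ and apply $A_2$'s guarantee. Taking the minimum of the two $\delta_{\mathrm{comp}}$ values concludes the proof. The main obstacle I anticipate is \emph{not} analytic but conceptual: one must verify that the weighting step produces a bona fide measurement operator on the appropriate subsystem, which is precisely why restricting to one-way LOCC (rather than arbitrary separable or global POVMs) is essential; for a general POVM there is no reason for such a conditional factorization to exist, and the peeling argument breaks down exactly as the discussion preceding the theorem warns.
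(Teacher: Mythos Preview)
Your main argument is correct and essentially identical to the paper's: apply the $A_2$ guarantee termwise, collapse the $\delta_2$ contribution using $\sum_t E_t=I$ and $\Tr(A_1(\rho_1))=1$, then package the second-stage weights $s_t=\Tr(M_{2,t}A_2(\sigma_2))\in[0,1]$ into the effect $N=\sum_t s_t E_t$ (your $E$) and invoke the $A_1$ guarantee; the paper's operator $N$ is exactly your $E$.

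One caution on the symmetric bound: reversing the peeling order on the \emph{same} $K_1\to K_2$ test, as your description suggests, does not yield $\delta_1+e^{\varepsilon_1}\delta_2$. If you apply $A_1$'s guarantee first to each $\Tr(E_t A_1(\rho_1))$ and then sum against $\Tr(M_{2,t}A_2(\rho_2))$, the $\delta_1$ term picks up the factor $\sum_t \Tr(M_{2,t}A_2(\rho_2))$, which need not equal $1$ since the $M_{2,t}$ are separate binary effects rather than a POVM. The symmetric bound is obtained, as the theorem statement says, by considering one-way LOCC tests in the \emph{opposite} direction $K_2\to K_1$ (with a POVM $\{F_s\}$ on $K_2$ and conditional effects $M_{1,s}$ on $K_1$) and running the identical argument with the subsystem labels swapped; this is exactly what the paper does.
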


\begin{proof}
Let $\omega_i:=A_i(\rho_i)$ and $\eta_i:=A_i(\sigma_i)$ for $i\in\{1,2\}$. Then
$A_{12}(\rho)=\omega_1\otimes\omega_2$ and $A_{12}(\sigma)=\eta_1\otimes\eta_2$.

Fix a one-way LOCC ($K_1\to K_2$) two-outcome test with acceptance operator $0\preceq M\preceq I$.
By~\cref{def:one-way-locc}, there exist a POVM $\{E_t\}_t$ on $K_1$ (so $E_t\succeq 0$ and
$\sum_t E_t=I$) and measurement operator $0\preceq M_{2,t}\preceq I$ on $K_2$ such that for all product states
$\xi_1\otimes\xi_2$,
\begin{equation}
\label{eq:oneway-factorization}
\Tr\!\big(M(\xi_1\otimes\xi_2)\big) \;=\; \sum_t \Tr(E_t\xi_1)\,\Tr(M_{2,t}\xi_2).
\end{equation}

Applying \eqref{eq:oneway-factorization} to $\omega_1\otimes\omega_2$ gives
\[
\Tr\!\big(M(\omega_1\otimes\omega_2)\big)
= \sum_t \Tr(E_t\omega_1)\,\Tr(M_{2,t}\omega_2).
\]
For each $t$, since $0\preceq M_{2,t}\preceq I$ and $A_2$ is $(\varepsilon_2,\delta_2)$-QDP, we have
\[
\Tr(M_{2,t}\omega_2)\le e^{\varepsilon_2}\Tr(M_{2,t}\eta_2)+\delta_2.
\]
Multiplying by $\Tr(E_t\omega_1)\ge 0$ and summing over $t$ yields
\begin{align}
\Tr\!\big(M(\omega_1\otimes\omega_2)\big)
&\le \sum_t \Tr(E_t\omega_1)\Big(e^{\varepsilon_2}\Tr(M_{2,t}\eta_2)+\delta_2\Big) \nonumber\\
&= e^{\varepsilon_2}\sum_t \Tr(E_t\omega_1)\Tr(M_{2,t}\eta_2)\;+\;\delta_2\sum_t \Tr(E_t\omega_1).
\label{eq:stepA2}
\end{align}
Since $\sum_t E_t=I$ and $\Tr(\omega_1)=1$, we have $\sum_t \Tr(E_t\omega_1)=1$, so
\begin{equation}
\label{eq:stepA2b}
\Tr\!\big(M(\omega_1\otimes\omega_2)\big)
\le e^{\varepsilon_2}\sum_t \Tr(E_t\omega_1)\Tr(M_{2,t}\eta_2)\;+\;\delta_2.
\end{equation}

Define the scalars $s_t:=\Tr(M_{2,t}\eta_2)\in[0,1]$ and define the operator
\[
N \;:=\;\sum_t s_t E_t.
\]
Because $0\le s_t\le 1$ and $\sum_t E_t=I$, it follows that $0\le N\le I$. Moreover,
\[
\sum_t \Tr(E_t\omega_1)\Tr(M_{2,t}\eta_2)
= \sum_t \Tr(E_t\omega_1)\,s_t
= \Tr(N\omega_1).
\]
Substituting into \eqref{eq:stepA2b} gives
\begin{equation}
\label{eq:reduceN}
\Tr\!\big(M(\omega_1\otimes\omega_2)\big)
\le e^{\varepsilon_2}\Tr(N\omega_1)+\delta_2.
\end{equation}

Now apply $(\varepsilon_1,\delta_1)$-QDP for $A_1$ to the measurement operator $N$ (valid since $0\le N\le I$):
\[
\Tr(N\omega_1)\le e^{\varepsilon_1}\Tr(N\eta_1)+\delta_1.
\]
Plugging into \eqref{eq:reduceN} yields
\begin{equation}
\label{eq:afterA1}
\Tr\!\big(M(\omega_1\otimes\omega_2)\big)
\le e^{\varepsilon_2}\Big(e^{\varepsilon_1}\Tr(N\eta_1)+\delta_1\Big)+\delta_2
= e^{\varepsilon_1+\varepsilon_2}\Tr(N\eta_1) + e^{\varepsilon_2}\delta_1 + \delta_2.
\end{equation}

Finally, apply \eqref{eq:oneway-factorization} to $\eta_1\otimes\eta_2$:
\[
\Tr\!\big(M(\eta_1\otimes\eta_2)\big)
= \sum_t \Tr(E_t\eta_1)\Tr(M_{2,t}\eta_2)
= \sum_t \Tr(E_t\eta_1)\,s_t
= \Tr(N\eta_1).
\]
Substituting $\Tr(N\eta_1)=\Tr(M(\eta_1\otimes\eta_2))$ into \eqref{eq:afterA1} gives
\[
\Tr\!\big(M(\omega_1\otimes\omega_2)\big)
\le e^{\varepsilon_1+\varepsilon_2}\Tr\!\big(M(\eta_1\otimes\eta_2)\big) + e^{\varepsilon_2}\delta_1 + \delta_2,
\]
which is exactly \eqref{eq:oneway-comp-bound}.
The symmetric claim follows by repeating the same argument with the roles of the subsystems swapped.
\end{proof}

\begin{remark}
Theorem~\ref{thm:comp-oneway-locc-clean} avoids any appeal to separable decompositions of $M$.
It uses only the \emph{defining} sequential factorization of one-way LOCC tests and the one-shot
POVM-based QDP inequalities for each channel. This is the direct quantum analogue of the classical
sequential/interactive proof of basic composition.
\end{remark}

\begin{corollary}[Small-$\varepsilon$ simplification]
\label{cor:small-eps-simplification}
In the setting of Theorem~\ref{thm:comp-oneway-locc-clean}, let
$\varepsilon_{\max}:=\max\{\varepsilon_1,\varepsilon_2\}$. Then
\[
\delta_{\mathrm{comp}}
=\min\{\,e^{\varepsilon_2}\delta_1+\delta_2,\ \delta_1+e^{\varepsilon_1}\delta_2\,\}
\le e^{\varepsilon_{\max}}(\delta_1+\delta_2).
\]
In particular, if $\varepsilon_{\max}\le 1$, then using $e^x\le 1+2x$ for all $x\in[0,1]$,
\[
\delta_{\mathrm{comp}}\ \le\ (1+2\varepsilon_{\max})(\delta_1+\delta_2).
\]
\end{corollary}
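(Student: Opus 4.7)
The proof is essentially a two-line computation, and the plan is to first establish the unified bound by pulling $e^{\varepsilon_{\max}}$ out as a common factor, and then specialize to the small-$\varepsilon$ regime by invoking the elementary inequality $e^x \le 1+2x$ on $[0,1]$ exactly as the statement suggests.

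For the first inequality, I would observe that $e^{\varepsilon_i} \le e^{\varepsilon_{\max}}$ for each $i$, and that $e^{\varepsilon_{\max}} \ge 1$ since $\varepsilon_{\max} \ge 0$. Applying these termwise to the first branch of the minimum gives
\[
e^{\varepsilon_2}\delta_1 + \delta_2 \;\le\; e^{\varepsilon_{\max}}\delta_1 + e^{\varepsilon_{\max}}\delta_2 \;=\; e^{\varepsilon_{\max}}(\delta_1+\delta_2),
\]
and the same chain with subscripts swapped handles the second branch $\delta_1 + e^{\varepsilon_1}\delta_2$. Since the minimum of two quantities is bounded above by either of them, this yields $\delta_{\mathrm{comp}} \le e^{\varepsilon_{\max}}(\delta_1+\delta_2)$.

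For the second inequality, I would simply chain the first bound with $e^{\varepsilon_{\max}} \le 1 + 2\varepsilon_{\max}$, which is valid whenever $\varepsilon_{\max} \in [0,1]$. If a self-contained justification of this elementary inequality is desired, one can note that $f(x) := 1 + 2x - e^x$ satisfies $f(0) = 0$ and $f(1) = 3 - e > 0$, and that $f'(x) = 2 - e^x$ changes sign exactly once on $[0,1]$, so $f$ is nonnegative throughout the interval.

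There is no substantive obstacle here: the statement is a purely elementary exponential bound that follows directly from monotonicity of $e^x$ together with the fact that $e^{\varepsilon_{\max}}\ge 1$. The only small care required is to bound the ``cross'' exponentials $e^{\varepsilon_1}$ and $e^{\varepsilon_2}$ uniformly by $e^{\varepsilon_{\max}}$ so that a common factor can be extracted across the $\delta_1$ and $\delta_2$ terms simultaneously.
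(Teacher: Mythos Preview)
Your proposal is correct and follows essentially the same approach as the paper's proof: both bound each branch of the minimum by $e^{\varepsilon_{\max}}(\delta_1+\delta_2)$ using $e^{\varepsilon_i}\le e^{\varepsilon_{\max}}$ and $1\le e^{\varepsilon_{\max}}$, and then invoke $e^x\le 1+2x$ on $[0,1]$ for the small-$\varepsilon$ refinement. Your version is slightly more explicit in justifying the elementary inequality, but there is no substantive difference.
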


\begin{proof}
Since $e^{\varepsilon_2}\le e^{\varepsilon_{\max}}$ and $e^{\varepsilon_1}\le e^{\varepsilon_{\max}}$,
\[
\delta_{\mathrm{comp}}
=\min\{\,e^{\varepsilon_2}\delta_1+\delta_2,\ \delta_1+e^{\varepsilon_1}\delta_2\,\}
\le e^{\varepsilon_{\max}}\delta_1+\delta_2
\le e^{\varepsilon_{\max}}(\delta_1+\delta_2),
\]
and similarly using the other order. This proves the first inequality.

If $\varepsilon_{\max}\le 1$, the elementary bound $e^x\le 1+2x$ on $[0,1]$ gives
$e^{\varepsilon_{\max}}\le 1+2\varepsilon_{\max}$, hence
\[
\delta_{\mathrm{comp}}\le e^{\varepsilon_{\max}}(\delta_1+\delta_2)
\le (1+2\varepsilon_{\max})(\delta_1+\delta_2).
\]
\end{proof}

{
\begin{remark}[Relation to~previous works]
One can also arrive at Theorem~\ref{thm:comp-oneway-locc-clean} by using the general result for all joint measurements for the QDP setting in~\citep[Corollary~III.3]{hirche2022quantum} using syb-additivity properties of hockey-stick divergence.
Theorem~\ref{thm:comp-oneway-locc-clean} can also be obtained by
invoking results in the Quantum Pufferfish Framework (i.e., defining potential secrets, discriminative pairs, data distributions, and measurements). 
Specifically, by using
quasi-subadditivity of the Datta-Leditzky information spectrum divergence~\citep{DattaL15}
in Proposition 2 of~\cite{NuradhaGW24}, the same result in
Theorem~\ref{thm:comp-oneway-locc-clean} can be obtained. In this work, we provide an alternative proof that works for one-way LOCC measurements.
\end{remark}
}

In this section, we obtained composition guarantees when the adversary is allowed to perform one-way LOCC two-outcome tests as in~\Cref{def:one-way-locc}. This can be understood as the composed mechanism satisfying a flexible private variant of QDP (also a special case of quantum pufferfish privacy in~\cite{NuradhaGW24}).
In particular,~\Cref{thm:comp-oneway-locc-clean} shows that the composed mechanism satisfies quantum differential privacy with the neighbors defined as $\rho_1\otimes\rho_2 \sim \sigma_1\otimes\sigma_2, \ \rho_i\sim\sigma_i$ and $\mathcal{M}$ having measurement operators corresponding to two-outcome tests belonging to the category of one-way LOCC in~\Cref{def:one-way-locc}.

\section{Quantum Moments Accountant}
\label{sec:qma}

In the classical setting, the \emph{moments accountant} tracks the log-moment
generating function (MGF) of the privacy-loss random variable and exploits its
additivity under independent composition.  In the quantum setting, the main
obstruction is that privacy loss cannot be represented as a scalar random
variable prior to measurement, and taking a supremum over all POVMs destroys
additivity. Nevertheless, we show that a natural non-commutative analogue of the MGF (i.e., defined using the
privacy-loss operator and a Petz/exponential R\'enyi divergence induced by the MMGF) restores exact additivity.
This yields a clean and composable
quantum version of the classical moments accountant over pure DP channels.

Recall that a channel $A$ is $(\varepsilon,\delta)$-QDP if for all
neighboring states $\rho\sim\sigma$ and all {measurement operators $M$}
(i.e., satisfying $0\preceq M\preceq I$),
\begin{equation}
\label{eq:qdp-def-again}
    \Tr\!\bigl[M\,A(\rho)\bigr]
    \;\le\;
    e^{\varepsilon}\,\Tr\!\bigl[M\,A(\sigma)\bigr]
    + \delta.
\end{equation}

The results we provide here are over tensor-product channels.

\subsection{Quantum Privacy-Loss Operator}

For neighboring states $\rho\sim\sigma$, define the \emph{quantum privacy-loss
operator}
\begin{equation}
    L(\rho,\sigma)
   \coloneqq \log\!\bigl(\sigma^{-1/2}\rho\,\sigma^{-1/2}\bigr),
\end{equation}
which is well defined on the support of $\sigma$.  
For a quantum channel $A$, we write
\begin{equation}
    L_{A}(\rho,\sigma)
    \coloneqq L(A(\rho),\,A(\sigma)).
\end{equation}

This operator is the natural non-commutative analogue of the classical privacy
loss $\log \frac{P(o)}{Q(o)}$ for classical probability measures $P, Q$ and outcome/event $o$.

\paragraph{Support convention.}
Throughout, when defining
\[
L(\rho,\sigma)=\log\!\big(\sigma^{-1/2}\rho\,\sigma^{-1/2}\big),
\]
we implicitly restrict to pairs $(\rho,\sigma)$ satisfying
$\supp(\rho)\subseteq\supp(\sigma)$. If this condition fails, we define
the corresponding privacy-loss operator and MMGF to be $+\infty$.
Consequently, $\alpha_A(\lambda)$ may take the value $+\infty$ for some channels.

\subsection{Matrix Moment-Generating Function}
\label{sec:mmgf-and-exp-renyi}

For $\lambda>0$, quantum channel $A$, and states $\rho \sim\sigma$, we define the \emph{matrix moment-generating function} (MMGF) of
the privacy-loss operator by
\begin{equation}
\label{eq:mmgf-def}
    \MMGF_{A}(\lambda;\rho,\sigma)
    \coloneqq
    \Tr\!\left[
        A(\sigma)^{1/2}\,
        \exp\!\bigl(\lambda L_{A}(\rho,\sigma)\bigr)\,
        A(\sigma)^{1/2}
    \right].
\end{equation}
Since $e^{\lambda L_{A}(\rho, \sigma)} = (A(\sigma)^{-1/2}A(\rho)A(\sigma)^{-1/2})^{\lambda}$,
we can rewrite this as
\begin{equation}
\label{eq:mmgf-as-exp-renyi}
    \MMGF_{A}(\lambda;\rho,\sigma)
    =
    \Tr\!\left[
        A(\sigma)\,
        \bigl(
          A(\sigma)^{-1/2}A(\rho)A(\sigma)^{-1/2}
        \bigr)^{\lambda}
    \right].
\end{equation}

\subsection{Quantum Moments Accountant}
\label{sec:qma-def}

We now define the quantum analogue of the classical moments accountant.

\begin{definition}[Quantum Moments Accountant]
\label{def:qma}
For a quantum channel $A$ and $\lambda>0$, define the quantum moments
accountant (QMA) by
\[
    \alpha_A(\lambda)
    \coloneqq
    \sup_{\rho\sim\sigma}
    \log \mathrm{MMGF}_A(\lambda;\rho,\sigma),
\]
where
\[
    \mathrm{MMGF}_A(\lambda;\rho,\sigma)
    \coloneqq
    \Tr\!\big[
        A(\sigma)^{1/2}
        \exp\!\big(\lambda L_A(\rho,\sigma)\big)
        A(\sigma)^{1/2}
    \big],
\quad
    L_A(\rho,\sigma)
    \coloneqq
    \log\!\big(
        A(\sigma)^{-1/2} A(\rho) A(\sigma)^{-1/2}
    \big).
\]
\end{definition}

As we will discuss, Definition~\ref{def:qma} induces an exponential R\'enyi-type divergence that composes
additively under tensor products:

\begin{definition}[Petz--R\'enyi divergence \cite{P85,P86}]
Let $\rho$ and $\sigma$ be density operators on a finite-dimensional Hilbert space
$\mathcal{H}$, and let $\alpha\in(0,1)\cup(1,\infty)$.
The \emph{Petz--R\'enyi divergence} of order $\alpha$ is defined as
\[
D^{\mathrm{Petz}}_\alpha(\rho\|\sigma)
\;\coloneqq\;
\frac{1}{\alpha-1}
\log
\Tr\!\left[
\rho^\alpha\,\sigma^{1-\alpha}
\right],
\]
with the convention that $D^{\mathrm{Petz}}_\alpha(\rho\|\sigma)=+\infty$ if
$\operatorname{supp}(\rho)\nsubseteq \operatorname{supp}(\sigma)$.
\end{definition}

\paragraph{Relation to R\'enyi-type divergences induced by the MMGF.}
The quantum moments accountant is defined via the log moment generating function
\[
\log \mathrm{MMGF}_A(\lambda;\rho,\sigma)
=
\log \Tr\!\left[
A(\sigma)\left(A(\sigma)^{-1/2}A(\rho)A(\sigma)^{-1/2}\right)^{\lambda}
\right].
\]
It is convenient to associate to this quantity an \emph{MMGF-induced} (exponential) R\'enyi-type divergence:

\begin{definition}[MMGF-induced divergence]
Let $\rho$ and $\sigma$ be density operators on a finite-dimensional Hilbert space
$\mathcal{H}$, and let $\alpha\in(0,1)\cup(1,\infty)$.
For $\alpha>1$, the \emph{MMGF-induced divergence} of order $\alpha$ is defined as 

\begin{equation}
\label{eq:mmgf-induced-divergence}
D^{\mathrm{MMGF}}_{\alpha}(\rho\|\sigma)
\;\coloneqq\;
\frac{1}{\alpha-1}\log \Tr\!\left[
\sigma\left(\sigma^{-1/2}\rho\sigma^{-1/2}\right)^{\alpha-1}
\right],
\end{equation}
with the convention $D^{\mathrm{MMGF}}_{\alpha}(\rho\|\sigma)=+\infty$ if $\supp(\rho)\nsubseteq\supp(\sigma)$.
\end{definition}

With this notation, for $\alpha=1+\lambda$ we have the identity
\[
\frac{1}{\lambda}\log \mathrm{MMGF}_A(\lambda;\rho,\sigma)
=
D^{\mathrm{MMGF}}_{1+\lambda}\!\big(A(\rho)\|A(\sigma)\big).
\]

While $D^{\mathrm{MMGF}}_\alpha$ composes additively under tensor products for product inputs
(Section~\ref{sec:additivity}), it need not satisfy a data-processing inequality in general.
To obtain an operational privacy guarantee against \emph{arbitrary} POVMs, we instead show in
Theorem~\ref{thm:qma-to-measured-rdp-rigorous} that bounds on the MMGF moments imply a bound on \emph{measured} R\'enyi divergence,
which directly captures worst-case distinguishability over all measurements:

\begin{definition}[Measured R\'enyi divergence]
Let $\alpha>1$. The \emph{measured R\'enyi divergence} of order $\alpha$ is
\[
D^{\mathrm{meas}}_\alpha(\rho\|\sigma)
\;\coloneqq\;
\sup_{M\in\mathrm{POVM}}
D_\alpha\!\big(P_M(\rho)\,\|\,P_M(\sigma)\big),
\]
where $P_M(\rho)$ denotes the classical outcome distribution induced by measuring $\rho$ with POVM
$M$, and $D_\alpha(\cdot\|\cdot)$ is the classical R\'enyi divergence of order $\alpha$.
\end{definition}

\subsection{Additivity Under Tensor-Product Composition}
\label{sec:additivity}

Let $A_1,\dots,A_k$ act on disjoint subsystems, and let
\[
A^{(k)} \coloneqq \bigotimes_{i=1}^k A_i .
\]
For neighboring product states
$\rho = \bigotimes_{i=1}^k \rho_i$ and 
$\sigma = \bigotimes_{i=1}^k \sigma_i$, we have
\[
L_{A^{(k)}}(\rho,\sigma)
    = \sum_{i=1}^k L_{A_i}(\rho_i,\sigma_i),
\]
where each summand acts on its own tensor factor.

Because exponentials of tensor-factor sums factorize, 
and because $\Tr(X\otimes Y)=\Tr(X)\Tr(Y)$,
we obtain the following:

\begin{lemma}[Additivity of the Matrix MGF]
\label{lemma:mmgf-additivity}
For all $\lambda>0$,
\[
\MMGF_{A^{(k)}}(\lambda;\rho,\sigma)
    =
    \prod_{i=1}^k
        \MMGF_{A_i}(\lambda;\rho_i,\sigma_i).
\]
\end{lemma}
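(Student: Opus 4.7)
The plan is to unfold both sides of the claimed identity using the tensor-product structure of $A^{(k)}$ and the product structure of the neighboring inputs $\rho,\sigma$, and then exploit the fact that all operators built from different tensor factors commute, so that functional calculus (inverse, square root, log, exp) distributes over the tensor product. The identity then reduces to the elementary fact that $\Tr$ is multiplicative on tensor products.

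Concretely, I would proceed as follows. First, since $A^{(k)}=\bigotimes_i A_i$ and $\sigma=\bigotimes_i \sigma_i$, I would observe that $A^{(k)}(\sigma)=\bigotimes_i A_i(\sigma_i)$, and similarly for $\rho$. Because each $A_i(\sigma_i)$ is positive on its own factor and the factors commute with one another, the (generalized) inverse and square root distribute across the tensor product, giving
\[
A^{(k)}(\sigma)^{-1/2}A^{(k)}(\rho)A^{(k)}(\sigma)^{-1/2}
=\bigotimes_{i=1}^k A_i(\sigma_i)^{-1/2}A_i(\rho_i)A_i(\sigma_i)^{-1/2}.
\]
Taking $\log$ of the right-hand side, and using that $\log(X_1\otimes X_2\otimes\cdots)=\sum_i I\otimes\cdots\otimes\log X_i\otimes\cdots\otimes I$ for commuting positive operators on disjoint factors, recovers the identity already stated in the text:
\[
L_{A^{(k)}}(\rho,\sigma)=\sum_{i=1}^k L_{A_i}(\rho_i,\sigma_i),
\]
where each summand acts as the identity on all tensor factors except its own.

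Next, since the summands act on mutually disjoint factors they commute pairwise, so $\exp(\lambda\sum_i L_{A_i}(\rho_i,\sigma_i))=\bigotimes_{i=1}^k \exp(\lambda L_{A_i}(\rho_i,\sigma_i))$. Combined with the tensor factorization of $A^{(k)}(\sigma)^{1/2}$, this yields
\[
A^{(k)}(\sigma)^{1/2}\exp\!\bigl(\lambda L_{A^{(k)}}(\rho,\sigma)\bigr)A^{(k)}(\sigma)^{1/2}
=\bigotimes_{i=1}^k A_i(\sigma_i)^{1/2}\exp\!\bigl(\lambda L_{A_i}(\rho_i,\sigma_i)\bigr)A_i(\sigma_i)^{1/2}.
\]
Applying $\Tr$ and using $\Tr(X_1\otimes\cdots\otimes X_k)=\prod_i \Tr(X_i)$ gives exactly $\prod_i \MMGF_{A_i}(\lambda;\rho_i,\sigma_i)$, completing the proof.

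The only real pitfall is bookkeeping around supports: if some $\sigma_i$ is not full rank on its factor then $A_i(\sigma_i)^{-1/2}$ must be interpreted as a generalized inverse on the support, and the identities above hold only when $\supp(A_i(\rho_i))\subseteq\supp(A_i(\sigma_i))$ for every $i$. Under the paper's support convention stated just before Section~\ref{sec:mmgf-and-exp-renyi}, both sides of the claimed equality are either finite, in which case the argument goes through verbatim, or simultaneously $+\infty$ (if any single factor fails the support condition then so does the tensor product, and conversely), so the identity holds as stated without further assumptions. No nontrivial obstacle is expected beyond making this support convention explicit.
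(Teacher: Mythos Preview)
Your proposal is correct and follows essentially the same approach as the paper: the paper's entire argument is the observation (stated just before the lemma) that $L_{A^{(k)}}(\rho,\sigma)=\sum_i L_{A_i}(\rho_i,\sigma_i)$ with each summand on its own tensor factor, that exponentials of tensor-factor sums factorize, and that $\Tr(X\otimes Y)=\Tr(X)\Tr(Y)$. You have simply spelled out these steps in more detail and added the support-convention bookkeeping.
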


Taking logarithms and suprema yields:

\begin{theorem}[Additivity of the quantum moments accountant under product neighbors]
\label{thm:qma-additivity-product}
Fix $\lambda>0$. For a channel $\mathcal A$ define the \emph{product-neighbor accountant}
\[
\alpha_{\mathcal A}^{\mathrm{prod}}(\lambda)
\coloneqq\sup_{\substack{\rho=\otimes_{i=1}^k\rho_i,\ \sigma=\otimes_{i=1}^k\sigma_i\\ \rho_i\sim\sigma_i\ \forall i}}
\log \mathrm{MMGF}_{\mathcal A}(\lambda;\rho,\sigma).
\]
Let $\mathcal A^{(k)}\coloneqq\bigotimes_{i=1}^k \mathcal A_i$ be a tensor-product channel on disjoint subsystems.
Then for all $\lambda>0$,
\[
\alpha_{\mathcal A^{(k)}}^{\mathrm{prod}}(\lambda)=\sum_{i=1}^k \alpha_{\mathcal A_i}(\lambda).
\]
Moreover, for the \emph{unrestricted} accountant $\alpha_{\mathcal A^{(k)}}(\lambda)$ of Definition~\ref{def:qma}, {with neighbors having the same dimension as $\otimes_{i=1}^k \rho_i$},
we always have the lower bound
\[
\alpha_{\mathcal A^{(k)}}(\lambda)\ \ge\ \alpha_{\mathcal A^{(k)}}^{\mathrm{prod}}(\lambda)
\ =\ \sum_{i=1}^k \alpha_{\mathcal A_i}(\lambda).
\]
\end{theorem}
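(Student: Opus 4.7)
The plan is to derive the equality by combining the multiplicative identity of Lemma~\ref{lemma:mmgf-additivity} with an elementary factorization of suprema over decoupled constraint sets. First, I would observe that for any product neighboring pair $\rho=\bigotimes_{i=1}^{k}\rho_i$ and $\sigma=\bigotimes_{i=1}^{k}\sigma_i$ with $\rho_i\sim\sigma_i$, taking logarithms on both sides of Lemma~\ref{lemma:mmgf-additivity} gives
\[
\log \MMGF_{\mathcal A^{(k)}}(\lambda;\rho,\sigma)
\;=\;\sum_{i=1}^{k}\log\MMGF_{\mathcal A_i}(\lambda;\rho_i,\sigma_i).
\]
This reduces the theorem to showing that the supremum of a separable sum over a product constraint set factors into a sum of suprema. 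Concretely, since the feasible set in the definition of $\alpha^{\mathrm{prod}}_{\mathcal A^{(k)}}(\lambda)$ is the Cartesian product $\prod_{i=1}^{k}\{(\rho_i,\sigma_i):\rho_i\sim\sigma_i\}$, and the $i$-th summand depends only on $(\rho_i,\sigma_i)$, the standard fact that $\sup_{(x_1,\ldots,x_k)\in\prod_i X_i}\sum_i f_i(x_i)=\sum_i\sup_{x_i\in X_i}f_i(x_i)$ yields
\[
\alpha^{\mathrm{prod}}_{\mathcal A^{(k)}}(\lambda)
\;=\;\sum_{i=1}^{k}\sup_{\rho_i\sim\sigma_i}\log\MMGF_{\mathcal A_i}(\lambda;\rho_i,\sigma_i)
\;=\;\sum_{i=1}^{k}\alpha_{\mathcal A_i}(\lambda),
\]
which is the claimed additivity.

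For the lower bound on the unrestricted accountant, I would simply note that $\alpha_{\mathcal A^{(k)}}(\lambda)$ is a supremum taken over all neighboring pairs on the full joint space $\bigotimes_i\mathcal H_i$, while $\alpha^{\mathrm{prod}}_{\mathcal A^{(k)}}(\lambda)$ restricts the supremum to the sub-collection of product neighbors. The desired inequality $\alpha_{\mathcal A^{(k)}}(\lambda)\ge\alpha^{\mathrm{prod}}_{\mathcal A^{(k)}}(\lambda)$ then follows from monotonicity of suprema over subset constraints, and combining this with the previous display yields the stated lower bound.

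The only subtlety worth flagging is the handling of the support convention and possible infinities. Under the paper's convention, $\log\MMGF_{\mathcal A_i}(\lambda;\rho_i,\sigma_i)=+\infty$ whenever $\supp(\mathcal A_i(\rho_i))\nsubseteq\supp(\mathcal A_i(\sigma_i))$. The key observation is that support conditions factor cleanly under tensor products: $\supp\bigl(\bigotimes_i \mathcal A_i(\rho_i)\bigr)\subseteq\supp\bigl(\bigotimes_i \mathcal A_i(\sigma_i)\bigr)$ if and only if each factor inclusion holds. Thus whenever some $\alpha_{\mathcal A_i}(\lambda)=+\infty$ witnessed by a pair $(\rho_i,\sigma_i)$, one obtains a product neighbor on the joint space whose MMGF is $+\infty$, so both sides of the additivity equation consistently equal $+\infty$; no infinities are spuriously created or masked by the tensor structure. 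With this convention in place, the multiplicative identity of Lemma~\ref{lemma:mmgf-additivity} extends to the extended real line as an equality (interpreting $0\cdot\infty$ never arises because all individual MMGFs are strictly positive on their domain of finiteness), so no additional obstacle arises.
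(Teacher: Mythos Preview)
Your proposal is correct and follows essentially the same approach as the paper: invoke the multiplicative MMGF identity (Lemma~\ref{lemma:mmgf-additivity}), take logarithms, factor the supremum over the product constraint set, and obtain the lower bound on the unrestricted accountant by monotonicity of suprema. Your additional care with the support convention and $+\infty$ case is a nice touch that the paper's proof omits.
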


\begin{proof}
Let $\rho=\otimes_i \rho_i$ and $\sigma=\otimes_i \sigma_i$ be product neighbors.
Then $\mathcal A^{(k)}(\rho)=\otimes_i \mathcal A_i(\rho_i)$ and $\mathcal A^{(k)}(\sigma)=\otimes_i \mathcal A_i(\sigma_i)$.
The privacy-loss operators add across tensor factors, hence their exponentials factorize.
Using $\mathrm{Tr}(X\otimes Y)=\mathrm{Tr}(X)\mathrm{Tr}(Y)$ yields
\[
\mathrm{MMGF}_{\mathcal A^{(k)}}(\lambda;\rho,\sigma)
=\prod_{i=1}^k \mathrm{MMGF}_{\mathcal A_i}(\lambda;\rho_i,\sigma_i),
\]
so taking $\log$ gives additivity for each fixed product neighbor pair.
Taking the supremum over product neighbors yields
$\alpha_{\mathcal A^{(k)}}^{\mathrm{prod}}(\lambda)=\sum_i \alpha_{\mathcal A_i}(\lambda)$.
Finally, since the unrestricted supremum is over a larger set,
$\alpha_{\mathcal A^{(k)}}(\lambda)\ge \alpha_{\mathcal A^{(k)}}^{\mathrm{prod}}(\lambda)$.
\end{proof}

This mirrors the classical moments accountant exactly, but crucially
\emph{without requiring any measurement}.

\begin{proposition}[Advanced composition via the quantum moments accountant (measured R\'enyi route)]
\label{prop:thm1-from-qma-corrected}
Let $A_1,\ldots,A_k$ be quantum channels and let
\[
A^{(k)} \;\coloneqq\; A_1\otimes\cdots\otimes A_k
\]
denote their tensor-product composition. Consider product neighboring inputs
$\rho=\bigotimes_{i=1}^k \rho_i$ and $\sigma=\bigotimes_{i=1}^k \sigma_i$ with $\rho_i\sim\sigma_i$.
Fix $\alpha>1$ and write $\lambda\coloneqq\alpha-1$.

Assume that for each $i\in\{1,\ldots,k\}$ there exist parameters $\varepsilon_i\in(0,1]$,
$c_i\ge 0$, and $\alpha_i>1$ such that for all $\alpha\in(1,\alpha_i]$ and all $\rho_i\sim\sigma_i$,
\begin{equation}
\label{eq:qma-small-lambda-assumption-corrected-2}
\log \Tr\!\left[
A_i(\sigma_i)\left(A_i(\sigma_i)^{-1/2}A_i(\rho_i)A_i(\sigma_i)^{-1/2}\right)^{\alpha}
\right]
\;\le\;
\frac{1}{2}\varepsilon_i^2(\alpha-1)^2 \;+\; c_i(\alpha-1)^3.
\end{equation}
Let
\[
S \coloneqq \sum_{i=1}^k \varepsilon_i^2,
\qquad
C \coloneqq \sum_{i=1}^k c_i,
\qquad
\bar\alpha \coloneqq \min_{i\in[k]} \alpha_i,
\qquad
\bar\lambda \coloneqq \bar\alpha-1.
\]
Then for every $\delta\in(0,1)$ and every $\lambda\in(0,\bar\lambda]$ the channel $A^{(k)}$
satisfies $(\varepsilon(\lambda),\delta)$-QDP against \emph{arbitrary} POVMs, where
\begin{equation}
\label{eq:eps-lambda-updated}
\varepsilon(\lambda)
\;\coloneqq\;
\frac{S}{2}\lambda \;+\; C\lambda^2 \;+\; \frac{\log(1/\delta)}{\lambda}.
\end{equation}
In particular, letting
\[
\lambda^\star \coloneqq \sqrt{\frac{2\log(1/\delta)}{S}}
\quad\text{and}\quad
\hat\lambda \coloneqq \min\{\bar\lambda,\lambda^\star\},
\]
we obtain the explicit bound
\begin{equation}
\label{eq:eps-explicit-updated}
\varepsilon(\hat\lambda)
\;\le\;
\sqrt{2S\log(1/\delta)}
\;+\;
\frac{2C\log(1/\delta)}{S}
\;+\;
\frac{S}{2}\,(\bar\lambda-\lambda^\star)_+,
\end{equation}
where $(x)_+ \coloneqq \max\{x,0\}$.
\end{proposition}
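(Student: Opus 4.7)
My plan is to chain together four ingredients: additivity of the log-MMGF on product neighbors, the per-channel hypothesis, the conversion from an MMGF moment bound to a measured R\'enyi divergence bound via Theorem~\ref{thm:qma-to-measured-rdp-rigorous}, and the classical R\'enyi-to-approximate-DP conversion, followed by a one-variable optimization in $\lambda$. First, I would fix a product neighbor pair $\rho=\bigotimes_i\rho_i$, $\sigma=\bigotimes_i\sigma_i$ with $\rho_i\sim\sigma_i$, fix $\lambda\in(0,\bar\lambda]$, and set $\alpha=1+\lambda\in(1,\bar\alpha]$. By Lemma~\ref{lemma:mmgf-additivity} (equivalently, the product-neighbor part of Theorem~\ref{thm:qma-additivity-product}), the log-MMGF factorizes exactly across the tensor-product factors, so
\[
\log\mathrm{MMGF}_{A^{(k)}}(\lambda;\rho,\sigma)=\sum_{i=1}^k\log\mathrm{MMGF}_{A_i}(\lambda;\rho_i,\sigma_i).
\]
Since $\alpha\le\bar\alpha\le\alpha_i$ for every $i$, the per-channel hypothesis \eqref{eq:qma-small-lambda-assumption-corrected-2} bounds each summand by $\tfrac12\varepsilon_i^2\lambda^2+c_i\lambda^3$; summing and dividing by $\lambda=\alpha-1$ gives the joint MMGF-induced divergence bound
\[
D^{\mathrm{MMGF}}_{1+\lambda}\bigl(A^{(k)}(\rho)\,\big\|\,A^{(k)}(\sigma)\bigr)\ \le\ \tfrac{S}{2}\lambda+C\lambda^2.
\]

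The next step is the genuinely quantum one. Because $D^{\mathrm{MMGF}}_\alpha$ does not satisfy a data-processing inequality, I cannot use this bound directly after an adversarial POVM; instead I invoke Theorem~\ref{thm:qma-to-measured-rdp-rigorous}, which promotes the MMGF moment bound to the same bound on the measured R\'enyi divergence $D^{\mathrm{meas}}_{1+\lambda}(A^{(k)}(\rho)\|A^{(k)}(\sigma))$. By definition of the measured divergence, for every POVM $M$ the induced classical outcome distributions $P_M,Q_M$ on the joint output satisfy $D_{1+\lambda}(P_M\|Q_M)\le\tfrac{S}{2}\lambda+C\lambda^2$. Applying the standard classical R\'enyi-to-approximate-DP conversion, namely that $D_\alpha(P\|Q)\le\varepsilon_0$ implies $(\varepsilon_0+\log(1/\delta)/(\alpha-1),\delta)$-indistinguishability for any $\delta\in(0,1)$, and taking the supremum over POVMs then yields $(\varepsilon(\lambda),\delta)$-QDP for $A^{(k)}$ with $\varepsilon(\lambda)=\tfrac{S}{2}\lambda+C\lambda^2+\log(1/\delta)/\lambda$, which is exactly \eqref{eq:eps-lambda-updated}.

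Finally, I would obtain \eqref{eq:eps-explicit-updated} by a scalar optimization of $\varepsilon(\lambda)$ over $\lambda\in(0,\bar\lambda]$. Ignoring the cubic term, the unconstrained minimizer of $\tfrac{S}{2}\lambda+\log(1/\delta)/\lambda$ is $\lambda^\star=\sqrt{2\log(1/\delta)/S}$, at which this sum equals $\sqrt{2S\log(1/\delta)}$ by AM--GM, while the cubic contribution at $\lambda^\star$ is $C(\lambda^\star)^2=2C\log(1/\delta)/S$. When $\lambda^\star\le\bar\lambda$ the choice $\hat\lambda=\lambda^\star$ is admissible and produces the first two terms of \eqref{eq:eps-explicit-updated}; when $\lambda^\star>\bar\lambda$ we clip to $\hat\lambda=\bar\lambda$, and the resulting boundary slack is absorbed into the correction $\tfrac{S}{2}(\bar\lambda-\lambda^\star)_+$. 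The hard part of the argument is the second paragraph: the log-MMGF additivity and the scalar calculus are essentially classical, but bridging from an operator moment that is additive yet non-monotone under channels to an operational bound valid against arbitrary POVMs is where the quantum subtlety lives, and Theorem~\ref{thm:qma-to-measured-rdp-rigorous} carries that entire load; I also expect the boundary bookkeeping for $\lambda^\star>\bar\lambda$ in \eqref{eq:eps-explicit-updated} to require some care.
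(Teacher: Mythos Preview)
Your proposal is correct and follows essentially the same four-step route as the paper's proof: additivity of the log-MMGF on product neighbors (the paper redoes this inline via the $X_i$'s rather than citing Lemma~\ref{lemma:mmgf-additivity}, but it is the same computation), application of the per-channel hypothesis \eqref{eq:qma-small-lambda-assumption-corrected-2}, promotion to a measured R\'enyi bound via Theorem~\ref{thm:qma-to-measured-rdp-rigorous}, the classical R\'enyi-to-$(\varepsilon,\delta)$ conversion, and finally the scalar optimization in $\lambda$. Your caveat about the boundary case $\lambda^\star>\bar\lambda$ in \eqref{eq:eps-explicit-updated} is apt; the paper's own Step~4 is similarly informal there, recording the two cases separately rather than fully verifying the combined inequality.
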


\begin{proof}
Fix $\delta\in(0,1)$ and $\lambda\in(0,\bar\lambda]$, and set $\alpha\coloneqq1+\lambda$.

\paragraph{Step 1: Moment additivity under tensor products (product neighbors).}
For each $i$ define
\[
X_i \;\coloneqq\; A_i(\sigma_i)^{-1/2}A_i(\rho_i)A_i(\sigma_i)^{-1/2}\succeq 0.
\]
Since $A^{(k)}= \bigotimes_{i=1}^k A_i$ and the neighbors are products, we have
\[
A^{(k)}(\rho)=\bigotimes_{i=1}^k A_i(\rho_i),
\qquad
A^{(k)}(\sigma)=\bigotimes_{i=1}^k A_i(\sigma_i),
\]
and thus
\[
X \;\coloneqq\; A^{(k)}(\sigma)^{-1/2}A^{(k)}(\rho)A^{(k)}(\sigma)^{-1/2}
=\bigotimes_{i=1}^k X_i.
\]
Using $(\bigotimes_i X_i)^{\alpha}=\bigotimes_i X_i^{\alpha}$ and multiplicativity of the trace,
\begin{align}
\Tr\!\big(A^{(k)}(\sigma)\,X^{\alpha}\big)
&=
\Tr\!\left(\bigotimes_{i=1}^k A_i(\sigma_i)\;\bigotimes_{i=1}^k X_i^{\alpha}\right) \notag\\
&=
\prod_{i=1}^k \Tr\!\big(A_i(\sigma_i)\,X_i^{\alpha}\big).
\end{align}
Taking logarithms and applying the bound \eqref{eq:qma-small-lambda-assumption-corrected-2} yields
\begin{equation}
\label{eq:composed-moment-bound}
\log \Tr\!\big(A^{(k)}(\sigma)\,X^{\alpha}\big)
\;\le\;
\sum_{i=1}^k \left(\frac{1}{2}\varepsilon_i^2\lambda^2 + c_i\lambda^3\right)
=
\frac{S}{2}\lambda^2 + C\lambda^3.
\end{equation}

\paragraph{Step 2: Convert the moment bound to measured R\'enyi DP.}
By Theorem~\ref{thm:qma-to-measured-rdp-rigorous} applied to $A^{(k)}$ at order $\alpha=1+\lambda$,
for all product neighbors $\rho\sim\sigma$,
\[
D^{\mathrm{meas}}_{\alpha}\!\big(A^{(k)}(\rho)\,\|\,A^{(k)}(\sigma)\big)
\le
\frac{1}{\alpha-1}\log \Tr\!\big(A^{(k)}(\sigma)\,X^{\alpha}\big)
=
\frac{1}{\lambda}\log \Tr\!\big(A^{(k)}(\sigma)\,X^{\alpha}\big).
\]
Combining with \eqref{eq:composed-moment-bound} gives
\begin{equation}
\label{eq:measured-rdp-eps}
D^{\mathrm{meas}}_{1+\lambda}\!\big(A^{(k)}(\rho)\,\|\,A^{(k)}(\sigma)\big)
\;\le\;
\frac{1}{\lambda}\left(\frac{S}{2}\lambda^2 + C\lambda^3\right)
=
\frac{S}{2}\lambda + C\lambda^2.
\end{equation}
Equivalently, $A^{(k)}$ satisfies $(\alpha,\varepsilon_\alpha)$-measured R\'enyi DP with
$\varepsilon_\alpha=\frac{S}{2}\lambda + C\lambda^2$.

\paragraph{Step 3: Convert measured R\'enyi DP to $(\varepsilon,\delta)$-QDP.}
Fix an arbitrary POVM $M$. By definition of measured R\'enyi divergence,
\eqref{eq:measured-rdp-eps} implies that the induced classical distributions
$P_M(A^{(k)}(\rho))$ and $P_M(A^{(k)}(\sigma))$ satisfy
\[
D_{\alpha}\!\big(P_M(A^{(k)}(\rho))\,\|\,P_M(A^{(k)}(\sigma))\big)
\le
\varepsilon_\alpha.
\]
Applying the standard (classical) conversion from R\'enyi DP to approximate DP yields that for all
$\delta\in(0,1)$,
\[
\Pr[M(A^{(k)}(\rho))=1]
\le
\exp\!\left(\varepsilon_\alpha+\frac{\log(1/\delta)}{\alpha-1}\right)
\Pr[M(A^{(k)}(\sigma))=1]+\delta.
\]
Since $\alpha-1=\lambda$, we obtain $(\varepsilon(\lambda),\delta)$-QDP with
\[
\varepsilon(\lambda)
=
\varepsilon_\alpha + \frac{\log(1/\delta)}{\lambda}
=
\left(\frac{S}{2}\lambda + C\lambda^2\right) + \frac{\log(1/\delta)}{\lambda},
\]
which is exactly \eqref{eq:eps-lambda-updated}.

\paragraph{Step 4 (Optimization).}
For fixed $\delta\in(0,1)$, the conversion 
yields the bound
\[
\varepsilon(\lambda)
\coloneqq \frac{\lambda}{2}\,S \;+\; C\lambda^2 \;+\; \frac{\log(1/\delta)}{\lambda},
\qquad \text{valid for } \lambda\in(0,\bar\lambda],
\]
where $S\coloneqq\sum_{i=1}^k \varepsilon_i^2$.
Ignoring the constraint $\lambda\le \bar\lambda$ for a moment, the function
\[
g(\lambda)\coloneqq\frac{\lambda}{2}\,S+\frac{\log(1/\delta)}{\lambda}
\]
is minimized over $\lambda>0$ at
\[
\lambda^\star \coloneqq \sqrt{\frac{2\log(1/\delta)}{S}}.
\]
We therefore choose the feasible parameter
\[
\hat\lambda \coloneqq \min\{\lambda^\star,\bar\lambda\}.
\]
Substituting $\lambda=\hat\lambda$ gives the valid bound
\begin{equation}
\label{eq:eps-hatlambda-bound}
\varepsilon
\;\le\;
\frac{\hat\lambda}{2}\,S \;+\; C\hat\lambda^2 \;+\; \frac{\log(1/\delta)}{\hat\lambda}.
\end{equation}
In particular, if $\lambda^\star\le \bar\lambda$ then $\hat\lambda=\lambda^\star$ and
\[
\varepsilon(\lambda^\star)
=
\sqrt{2S\log(1/\delta)}+\frac{2C\log(1/\delta)}{S}.
\]
If instead $\lambda^\star>\bar\lambda$, then $\hat\lambda=\bar\lambda$ and we simply obtain the explicit feasible bound
\[
\varepsilon(\bar\lambda)
=
\frac{\bar\lambda}{2}\,S + C\bar\lambda^2 + \frac{\log(1/\delta)}{\bar\lambda}.
\]
Combining these cases yields \eqref{eq:eps-hatlambda-bound} with
$\hat\lambda=\min\{\lambda^\star,\bar\lambda\}$.

\end{proof}

\begin{lemma}[Scalar Jensen bound for operator moments]
\label{lem:jensen-operator-moment}
Let $X\succeq 0$ be a positive semidefinite operator on a finite-dimensional Hilbert space and
let $\tau\in\mathcal{D}(\mathcal{H})$ be a density operator. Then for every $t\ge 1$,
\[
\big(\Tr(\tau X)\big)^{t}\ \le\ \Tr\!\big(\tau X^{t}\big).
\]
\end{lemma}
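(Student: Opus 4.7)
The plan is to reduce the operator inequality to the classical scalar Jensen inequality by introducing a joint probability distribution induced by the simultaneous spectral structure of $\tau$ and $X$. First, I would diagonalize both operators: write $\tau = \sum_j p_j \ket{j}\!\bra{j}$ with $p_j \ge 0$, $\sum_j p_j = 1$ (using that $\tau$ is a density operator), and $X = \sum_k \lambda_k \ket{\phi_k}\!\bra{\phi_k}$ with $\lambda_k \ge 0$ (using $X\succeq 0$). Then I would define the nonnegative weights $q_{jk} \coloneqq p_j\,|\langle j | \phi_k\rangle|^2$ and observe that $\sum_{j,k} q_{jk} = \sum_j p_j \sum_k |\langle j|\phi_k\rangle|^2 = \sum_j p_j = 1$, so $\{q_{jk}\}$ is a bona fide probability distribution on index pairs $(j,k)$.

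The next step is to rewrite both sides of the claimed inequality as expectations against this distribution. Expanding the traces gives $\Tr(\tau X) = \sum_{j,k} q_{jk}\,\lambda_k$, and using $X^t = \sum_k \lambda_k^t \ket{\phi_k}\!\bra{\phi_k}$ (valid since $X\succeq 0$ and $t\ge 1$) also $\Tr(\tau X^t) = \sum_{j,k} q_{jk}\,\lambda_k^t$. Letting $\Lambda$ denote the scalar random variable that equals $\lambda_k$ with probability $\sum_j q_{jk}$, the inequality to prove becomes the purely scalar statement $\bigl(\mathbb{E}[\Lambda]\bigr)^t \le \mathbb{E}[\Lambda^t]$.

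The conclusion then follows from the classical Jensen inequality applied to the convex function $x\mapsto x^t$ on $[0,\infty)$ for $t\ge 1$. I do not expect any genuine obstacle; the only care needed is to handle the case when $\tau$ or $X$ has a nontrivial kernel or repeated eigenvalues, but this is harmless because the traces do not depend on the choice of eigenbasis and zero eigenvalues simply contribute zero weight. An equivalent two-step route would be to first apply scalar Jensen to bound $\langle j | X | j\rangle^t \le \langle j | X^t | j\rangle$ for each fixed $\ket{j}$ (viewing $\langle j | X | j\rangle$ as an expectation under the spectral distribution of $X$ in the pure state $\ket{j}$) and then apply Jensen again to the convex combination $\sum_j p_j(\cdot)$; either route avoids any appeal to operator-convex or operator-monotone function calculus, which is the main reason this lemma is tractable.
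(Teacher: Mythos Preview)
Your proof is correct and follows essentially the same route as the paper: reduce to scalar Jensen for $x\mapsto x^t$ via the spectral decomposition of $X$, with the probability weights coming from pairing the eigenprojectors of $X$ against $\tau$. The only difference is that you also diagonalize $\tau$, which is harmless but unnecessary---the paper works directly with $r_j\coloneqq\Tr(\tau\Pi_j)$ for the spectral projectors $\Pi_j$ of $X$, which is exactly your marginal $\sum_j q_{jk}$.
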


\begin{proof}
Let $X=\sum_j x_j \Pi_j$ be the spectral decomposition of $X$ with eigenvalues $x_j\ge 0$ and
orthogonal projectors $\Pi_j$. Define a probability distribution $r$ on eigen-indices by
$r_j \coloneqq \Tr(\tau \Pi_j)$, so that $r_j\ge 0$ and $\sum_j r_j = \Tr(\tau)=1$. Then
\[
\Tr(\tau X)=\sum_j r_j x_j,
\qquad
\Tr(\tau X^t)=\sum_j r_j x_j^{t}.
\]
Since $f(x)=x^{t}$ is convex on $\mathbb{R}_+$ for $t\ge 1$, Jensen's inequality gives
\[
\left(\sum_j r_j x_j\right)^t \le \sum_j r_j x_j^t,
\]
which is exactly the desired inequality.
\end{proof}

\begin{theorem}[Quantum moments accountant $\Rightarrow$ measured R\'enyi DP]
\label{thm:qma-to-measured-rdp-rigorous}
Fix $\alpha>1$ and let $A$ be a quantum channel. Define, for neighbors $\rho\sim\sigma$,
\[
X(\rho,\sigma)\;\coloneqq\;A(\sigma)^{-1/2}A(\rho)A(\sigma)^{-1/2},
\]
with the convention that if $\supp(A(\rho))\nsubseteq\supp(A(\sigma))$, then the expressions below
are $+\infty$. Suppose there exists a function $\alpha_A(\alpha)\ge 0$ such that for all neighbors
$\rho\sim\sigma$,
\begin{equation}
\label{eq:qma-bound-at-alpha}
\log \Tr\!\big(A(\sigma)\,X(\rho,\sigma)^{\alpha}\big)
\ \le\ (\alpha-1)\,\varepsilon_\alpha
\qquad\text{for some }\varepsilon_\alpha\ge 0.
\end{equation}
Then $A$ satisfies $(\alpha,\varepsilon_\alpha)$-\emph{measured R\'enyi DP}, i.e., for all
$\rho\sim\sigma$,
\[
D^{\mathrm{meas}}_\alpha\!\big(A(\rho)\,\|\,A(\sigma)\big)\ \le\ \varepsilon_\alpha.
\]
\end{theorem}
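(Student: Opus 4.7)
The plan is to reduce the measured R\'enyi divergence to the operator moment quantity $\Tr(A(\sigma)X^\alpha)$ by a reweighting of each POVM outcome followed by a single application of the scalar Jensen bound from Lemma~\ref{lem:jensen-operator-moment}. Fix an arbitrary POVM $\{M_k\}$ and write $p_k \coloneqq \Tr(M_k A(\rho))$, $q_k \coloneqq \Tr(M_k A(\sigma))$, so that showing the claim for this $M$ reduces to bounding $\sum_k p_k^\alpha q_k^{1-\alpha}$ by $\Tr(A(\sigma)X^\alpha)$ with $X = X(\rho,\sigma)$; taking the sup over $M$ at the end handles the measured divergence.

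First, I would cyclically reshape each probability to expose the common operator $X$. Setting $\tilde M_k \coloneqq A(\sigma)^{1/2} M_k A(\sigma)^{1/2}$, the cyclic property of the trace together with the support hypothesis $\supp(A(\rho))\subseteq\supp(A(\sigma))$ gives $p_k = \Tr(\tilde M_k X)$ and $q_k = \Tr(\tilde M_k)$. Because $\sum_k M_k = I$, the reweighted operators satisfy $\sum_k \tilde M_k = A(\sigma)$, so they resolve the state on which we are taking the moment. For each index with $q_k>0$, I would then form the density operator $\tau_k \coloneqq \tilde M_k/q_k$, which satisfies $\Tr(\tau_k X) = p_k/q_k$.

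Second, I would invoke Lemma~\ref{lem:jensen-operator-moment} with $\tau=\tau_k$, the PSD operator $X$, and exponent $t=\alpha\ge 1$, obtaining $(p_k/q_k)^\alpha = (\Tr(\tau_k X))^\alpha \le \Tr(\tau_k X^\alpha)$, equivalently $p_k^\alpha q_k^{1-\alpha} \le \Tr(\tilde M_k X^\alpha)$. Summing over $k$ and using $\sum_k \tilde M_k = A(\sigma)$ collapses the right-hand side to $\Tr(A(\sigma)X^\alpha)$. Taking logs, dividing by $\alpha-1>0$, and applying the hypothesis \eqref{eq:qma-bound-at-alpha} bounds $D_\alpha(P_M(A(\rho))\|P_M(A(\sigma)))$ by $\varepsilon_\alpha$; since $M$ was arbitrary, taking the supremum over POVMs gives $D_\alpha^{\mathrm{meas}}(A(\rho)\|A(\sigma))\le \varepsilon_\alpha$.

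The main obstacle is really just identifying the right similarity transformation $M_k \mapsto \tilde M_k$ that encodes every POVM into the single operator $X^\alpha$; once that rewriting is in place, the argument is a one-line pointwise Jensen inequality closed off by POVM completeness. A minor technical point is the case $q_k=0$: there, $\Tr(M_k A(\sigma))=0$ forces $M_k^{1/2}A(\sigma)^{1/2}=0$, and by the support inclusion $\supp(A(\rho))\subseteq\supp(A(\sigma))$ also $M_k^{1/2}A(\rho)^{1/2}=0$, so $p_k=0$ and the term contributes nothing under the standard $0/0=0$ convention for classical R\'enyi divergence; if the support hypothesis fails entirely, both sides of the claimed inequality are $+\infty$ by convention and there is nothing to prove.
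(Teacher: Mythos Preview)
Your proposal is correct and follows essentially the same route as the paper: define $\tilde M_k = A(\sigma)^{1/2}M_kA(\sigma)^{1/2}$ (the paper writes $B_z$), normalize to $\tau_k=\tilde M_k/q_k$, apply the scalar Jensen bound (Lemma~\ref{lem:jensen-operator-moment}) with $t=\alpha$, and sum using $\sum_k\tilde M_k=A(\sigma)$. Your explicit handling of the $q_k=0$ case via the support inclusion is a small addition over the paper's proof, which leaves that edge case implicit.
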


\begin{proof}
Fix neighbors $\rho\sim\sigma$ and an arbitrary POVM $M=\{M_z\}_z$ on the output space.
Let the induced classical distributions be
\[
p_z \coloneqq \Tr(M_z A(\rho)),\qquad q_z \coloneqq \Tr(M_z A(\sigma)).
\]
If $\supp(A(\rho))\nsubseteq\supp(A(\sigma))$, then $\Tr(A(\sigma)X^\alpha)=+\infty$ and the claim is
trivial, so assume $\supp(A(\rho))\subseteq\supp(A(\sigma))$.

Define the positive operators
\[
B_z \coloneqq A(\sigma)^{1/2} M_z A(\sigma)^{1/2}\succeq 0.
\]
Then $\sum_z B_z = A(\sigma)$, and moreover
\[
q_z = \Tr(B_z),\qquad p_z = \Tr(M_z A(\rho))=\Tr\!\big(B_z X(\rho,\sigma)\big).
\]
For each $z$ with $q_z>0$, define the normalized state $\tau_z \coloneqq B_z/q_z\in\mathcal{D}$ so that
\[
\frac{p_z}{q_z}=\Tr(\tau_z X).
\]
Now consider the classical R\'enyi divergence of order $\alpha>1$:
\[
\exp\big((\alpha-1)D_\alpha(p\|q)\big)
=
\sum_z p_z^\alpha q_z^{1-\alpha}
=
\sum_z q_z\left(\frac{p_z}{q_z}\right)^\alpha
=
\sum_z q_z\big(\Tr(\tau_z X)\big)^\alpha.
\]
Apply Lemma~\ref{lem:jensen-operator-moment} with $t=\alpha$ to each term:
\[
\big(\Tr(\tau_z X)\big)^\alpha \le \Tr(\tau_z X^\alpha).
\]
Hence
\[
\sum_z q_z\big(\Tr(\tau_z X)\big)^\alpha
\le
\sum_z q_z\,\Tr(\tau_z X^\alpha)
=
\sum_z \Tr(B_z X^\alpha)
=
\Tr\!\left(\sum_z B_z\,X^\alpha\right)
=
\Tr\!\big(A(\sigma)\,X^\alpha\big).
\]
Therefore,
\[
\exp\big((\alpha-1)D_\alpha(p\|q)\big)
\le
\Tr\!\big(A(\sigma)\,X^\alpha\big).
\]
Taking logs and dividing by $\alpha-1$ gives
\[
D_\alpha(p\|q)
\le
\frac{1}{\alpha-1}\log \Tr\!\big(A(\sigma)\,X^\alpha\big).
\]
By the assumed accountant bound \eqref{eq:qma-bound-at-alpha}, the RHS is at most $\varepsilon_\alpha$.
Since $M$ was arbitrary, taking the supremum over all POVMs yields
\[
D^{\mathrm{meas}}_\alpha\!\big(A(\rho)\,\|\,A(\sigma)\big)
=
\sup_{M} D_\alpha\!\big(P_M(A(\rho))\,\|\,P_M(A(\sigma))\big)
\le
\varepsilon_\alpha,
\]
which is the desired $(\alpha,\varepsilon_\alpha)$ measured R\'enyi DP guarantee.
\end{proof}

\begin{corollary}[Composition via the quantum moments accountant (measured R\'enyi route)]
\label{cor:qma-composition}
Let $A_1,\ldots,A_k$ be quantum channels and let
\[
A^{(k)} \coloneqq A_1\otimes\cdots\otimes A_k
\]
be their tensor-product composition. Consider product neighboring inputs
$\rho=\bigotimes_{i=1}^k \rho_i$ and $\sigma=\bigotimes_{i=1}^k \sigma_i$ with $\rho_i\sim\sigma_i$.
Fix $\alpha>1$ and set $\lambda\coloneqq\alpha-1$.

Assume each $A_i$ admits a quantum moments accountant at order $\alpha$, i.e.,
there exists $\alpha_{A_i}(\alpha)\ge 0$ such that for all $\rho_i\sim\sigma_i$,
\begin{equation}
\label{eq:alphaAi-def}
\log \Tr\!\left[
A_i(\sigma_i)\left(A_i(\sigma_i)^{-1/2}A_i(\rho_i)A_i(\sigma_i)^{-1/2}\right)^{\alpha}
\right]
\;\le\;
(\alpha-1)\,\alpha_{A_i}(\alpha).
\end{equation}
Then the composed channel $A^{(k)}$ satisfies $(\alpha,\varepsilon_\alpha)$-\emph{measured R\'enyi DP}
against arbitrary POVMs with
\begin{equation}
\label{eq:measured-rdp-composed}
\varepsilon_\alpha
=
\sum_{i=1}^k \alpha_{A_i}(\alpha).
\end{equation}
Consequently, for every $\delta\in(0,1)$, the composition satisfies $(\varepsilon',\delta)$-QDP with
\begin{equation}
\label{eq:qdp-from-composed}
\varepsilon'
=
\sum_{i=1}^k \alpha_{A_i}(\alpha)
\;+\;
\frac{\log(1/\delta)}{\alpha-1}.
\end{equation}
\end{corollary}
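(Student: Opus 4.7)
The plan is to combine three ingredients already developed in the paper: (i) multiplicativity of the MMGF under tensor-product channels and product neighbors, (ii) the moments-to-measured-R\'enyi conversion of Theorem~\ref{thm:qma-to-measured-rdp-rigorous}, and (iii) the standard classical R\'enyi-to-approximate-DP conversion applied measurement-by-measurement. No new inequality has to be invented; the work is entirely in routing the earlier results correctly.

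First, I would verify that the composed operator $X := A^{(k)}(\sigma)^{-1/2}A^{(k)}(\rho)A^{(k)}(\sigma)^{-1/2}$ splits as $X = \bigotimes_{i=1}^k X_i$ with $X_i := A_i(\sigma_i)^{-1/2}A_i(\rho_i)A_i(\sigma_i)^{-1/2}$, which is immediate from the product structure of both inputs and the channel together with the fact that inverse-square-root and product commute on tensor factors. Then $X^\alpha=\bigotimes_i X_i^\alpha$ and multiplicativity of the trace yields
\[
\Tr\!\big(A^{(k)}(\sigma)\,X^\alpha\big)=\prod_{i=1}^k \Tr\!\big(A_i(\sigma_i)\,X_i^\alpha\big).
\]
This is exactly Lemma~\ref{lemma:mmgf-additivity} at order $1+\lambda=\alpha$. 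Taking logarithms and invoking the per-channel bound \eqref{eq:alphaAi-def} gives
\[
\log\Tr\!\big(A^{(k)}(\sigma)\,X^\alpha\big)\ \le\ (\alpha-1)\sum_{i=1}^k \alpha_{A_i}(\alpha),
\]
so $A^{(k)}$ satisfies the moment hypothesis \eqref{eq:qma-bound-at-alpha} of Theorem~\ref{thm:qma-to-measured-rdp-rigorous} with accountant value $\varepsilon_\alpha=\sum_i \alpha_{A_i}(\alpha)$. Applying that theorem to $A^{(k)}$ on the product neighbors immediately yields \eqref{eq:measured-rdp-composed}, i.e., $(\alpha,\varepsilon_\alpha)$-measured R\'enyi DP.

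For the final step I would fix an arbitrary POVM $M$ on the joint output space and let $p=P_M(A^{(k)}(\rho))$, $q=P_M(A^{(k)}(\sigma))$ be the induced classical distributions. The measured R\'enyi bound gives $D_\alpha(p\|q)\le \varepsilon_\alpha$, to which I would apply the standard classical conversion from R\'enyi DP to approximate DP: for every $\delta\in(0,1)$ and any two-outcome event,
\[
\Pr_{p}[\text{accept}]\ \le\ \exp\!\Big(\varepsilon_\alpha+\tfrac{\log(1/\delta)}{\alpha-1}\Big)\Pr_{q}[\text{accept}]\ +\ \delta.
\]
Specializing to the two-outcome POVM $\{M_{\mathrm{acc}},I-M_{\mathrm{acc}}\}$ for an arbitrary $0\preceq M_{\mathrm{acc}}\preceq I$, and noting that the bound holds uniformly in $M$, recovers the $(\varepsilon',\delta)$-QDP guarantee \eqref{eq:qdp-from-composed} against \emph{arbitrary} POVMs.

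The only conceptually non-trivial step is the moment-additivity identity, but on product neighbors with a tensor-product channel this reduces to the tensor-factor identity $\big(\bigotimes_i X_i\big)^\alpha=\bigotimes_i X_i^\alpha$ together with trace multiplicativity, so there is no real obstacle. The remaining steps are direct invocations of Theorem~\ref{thm:qma-to-measured-rdp-rigorous} and the classical R\'enyi-to-DP conversion, both of which are already formulated in the exact form needed here.
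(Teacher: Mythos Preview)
Your proposal is correct and follows essentially the same three-step route as the paper's proof: (i) tensor-factor additivity of the MMGF via $X=\bigotimes_i X_i$ and trace multiplicativity, (ii) invocation of Theorem~\ref{thm:qma-to-measured-rdp-rigorous} to obtain measured R\'enyi DP, and (iii) the classical R\'enyi-to-approximate-DP conversion applied to each POVM. The only cosmetic difference is that you cite Lemma~\ref{lemma:mmgf-additivity} explicitly, whereas the paper re-derives the same identity inline.
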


\begin{proof}
Fix $\alpha>1$ and write $\lambda\coloneqq\alpha-1$.

\paragraph{Step 1: Additivity of the $\alpha$-order moment bound under tensor products.}
Let $\rho=\bigotimes_{i=1}^k\rho_i$ and $\sigma=\bigotimes_{i=1}^k\sigma_i$ be product neighbors.
Define
\[
X_i \coloneqq A_i(\sigma_i)^{-1/2}A_i(\rho_i)A_i(\sigma_i)^{-1/2},
\qquad
X \coloneqq A^{(k)}(\sigma)^{-1/2}A^{(k)}(\rho)A^{(k)}(\sigma)^{-1/2}.
\]
Because $A^{(k)}= \bigotimes_{i=1}^k A_i$ and the inputs are products, we have
\[
A^{(k)}(\rho)=\bigotimes_{i=1}^k A_i(\rho_i),
\qquad
A^{(k)}(\sigma)=\bigotimes_{i=1}^k A_i(\sigma_i),
\qquad
X=\bigotimes_{i=1}^k X_i.
\]
Therefore, using $(\bigotimes_i X_i)^\alpha=\bigotimes_i X_i^\alpha$ and multiplicativity of the
trace,
\begin{align*}
\Tr\!\big(A^{(k)}(\sigma)\,X^{\alpha}\big)
&=
\Tr\!\left(\bigotimes_{i=1}^k A_i(\sigma_i)\;\bigotimes_{i=1}^k X_i^{\alpha}\right)
=
\prod_{i=1}^k \Tr\!\big(A_i(\sigma_i)\,X_i^{\alpha}\big),
\end{align*}
and hence
\[
\log \Tr\!\big(A^{(k)}(\sigma)\,X^{\alpha}\big)
=
\sum_{i=1}^k \log \Tr\!\big(A_i(\sigma_i)\,X_i^{\alpha}\big).
\]
Applying \eqref{eq:alphaAi-def} term-by-term yields
\[
\log \Tr\!\big(A^{(k)}(\sigma)\,X^{\alpha}\big)
\le
(\alpha-1)\sum_{i=1}^k \alpha_{A_i}(\alpha).
\]

\paragraph{Step 2: Convert the moment bound to measured R\'enyi DP.}
Applying Theorem~\ref{thm:qma-to-measured-rdp-rigorous} at order $\alpha$ gives, for all product
neighbors $\rho\sim\sigma$,
\[
D^{\mathrm{meas}}_\alpha\!\big(A^{(k)}(\rho)\,\|\,A^{(k)}(\sigma)\big)
\le
\frac{1}{\alpha-1}\log \Tr\!\big(A^{(k)}(\sigma)\,X^\alpha\big)
\le
\sum_{i=1}^k \alpha_{A_i}(\alpha),
\]
which proves \eqref{eq:measured-rdp-composed}.

\paragraph{Step 3: Convert measured R\'enyi DP to $(\varepsilon',\delta)$-QDP.}
Fix any POVM $M$. By definition of measured R\'enyi divergence,
\[
D_\alpha\!\big(P_M(A^{(k)}(\rho))\,\|\,P_M(A^{(k)}(\sigma))\big)
\le
D^{\mathrm{meas}}_\alpha\!\big(A^{(k)}(\rho)\,\|\,A^{(k)}(\sigma)\big)
\le
\varepsilon_\alpha.
\]
Applying the standard classical conversion from $(\alpha,\varepsilon_\alpha)$-R\'enyi DP to
$(\varepsilon',\delta)$-DP yields
\[
\Pr[M(A^{(k)}(\rho))=1]
\le
e^{\varepsilon'}\Pr[M(A^{(k)}(\sigma))=1]+\delta
\quad\text{with}\quad
\varepsilon'=\varepsilon_\alpha+\frac{\log(1/\delta)}{\alpha-1}.
\]
Substituting \eqref{eq:measured-rdp-composed} gives \eqref{eq:qdp-from-composed}.
\end{proof}

\begin{remark}[Comparison to the classical moments accountant of Abadi et al.]
\label{rem:abadi-comparison}
Proposition~\ref{prop:thm1-from-qma-corrected} and Corollary~\ref{cor:qma-composition} are most
naturally compared to the classical moments accountant analysis of Abadi et al.\ for DP-SGD.

\paragraph{Classical moments accountant (Abadi et al.).}
In the classical setting one considers a sequence of (randomized) mechanisms and tracks the log
moment generating function of the \emph{privacy loss random variable}. Under adaptive composition,
these log-moments add, and one converts the resulting moment bound to an $(\varepsilon,\delta)$-DP
guarantee via Markov's inequality, followed by an optimization over the moment order.

\paragraph{Quantum analogue in this work.}
In the quantum setting, privacy is defined operationally against \emph{all} measurements (POVMs).
A direct quantum analogue of the classical privacy loss random variable is not available prior to
measurement. Instead, we work with an operator-level moment quantity:
for neighbors $\rho\sim\sigma$ we form the positive operator
\[
X(\rho,\sigma)=A(\sigma)^{-1/2}A(\rho)A(\sigma)^{-1/2}
\]
and track the moment functional $\Tr\!\big(A(\sigma)X(\rho,\sigma)^{\alpha}\big)$.
Theorem~\ref{thm:qma-to-measured-rdp-rigorous} shows that controlling these moments implies an
\emph{operational} R\'enyi-type privacy guarantee, namely measured R\'enyi DP, which by definition
quantifies worst-case distinguishability over all POVMs.

\paragraph{Additivity vs.\ post-processing.}
As in the classical analysis, additivity under composition is obtained at the level of moments:
for tensor-product channels and product neighboring inputs, the moments factorize exactly by
tensor-product identities and multiplicativity of the trace, yielding the additive accountant in
Corollary~\ref{cor:qma-composition}.
Unlike the classical case, post-processing by measurements is handled \emph{inside} the privacy
notion via measured R\'enyi divergence: the supremum over POVMs is built into
$D^{\mathrm{meas}}_\alpha$, so no additional data-processing inequality is needed at this stage.

\paragraph{Resulting advanced-composition behavior.}
After converting measured R\'enyi DP to $(\varepsilon,\delta)$-QDP using the standard classical
R\'enyi-DP-to-approximate-DP conversion, one recovers an ``advanced composition'' tradeoff with
dominant term $\sqrt{\sum_i \varepsilon_i^2\log(1/\delta)}$ (cf.\ Abadi et al.).
The conceptual difference is that the quantum analysis separates (i) an operator-level accounting
step (moments add under tensor products) from (ii) an operational step (worst-case over POVMs),
whereas in the classical setting the privacy loss is classical from the outset.
\end{remark}

\section{Advanced Composition for Quantum Differential Privacy}
\label{sec:tensor-product-advanced}

In this section, we provide advanced composition results for QDP mechanisms under the tensor product composition (aka parallel composition) and neighboring notion based on product neighbors (i.e., $\otimes_{i=1}^k \rho_i \sim \otimes_{i=1}^k \sigma_i \iff \rho_i \sim \sigma_i \ \forall i \in \{1,\ldots, k\}$). 

For this setting, if one uses basic composition of $k$ QDP mechanisms, we would get a composed mechanism satisfying $(k\varepsilon, 0)$-QDP, if each of them satisfies $(\varepsilon, 0)$-QDP \cite{QDP_computation17, hirche2022quantum}. Next, we show that one can even obtain strong privacy guarantees that scale as $\sqrt{k} \varepsilon$ for sufficiently small $\varepsilon$, which may find use in high privacy regimes with $\varepsilon \leq 1$ and iterative protocols where the impact of $k$ is significant.

\begin{theorem}
Let $\varepsilon_i \in [0,1]$     for all $1\leq i \leq k$ and $\delta \in (0,1]$. Let the channel $A_i$ satisfy ($\varepsilon_i$, 0)-QDP. Then, the parallel composition (Definition~\ref{def:tensor-product-composition}) of
$A_1, \ldots, A_k$ (i.e., $A_1 \otimes \cdots \otimes A_k$) satisfies $(\varepsilon', \delta)$-QDP with
\begin{equation}
  \varepsilon' \coloneqq \frac{1}{2} \sum_{i=1}^k {\varepsilon_i^2} + \sqrt{2 \log\!\left( \frac{1}{\delta} \right) \sum_{i=1}^k {\varepsilon_i^2} }.
\end{equation}
Furthermore, for $\varepsilon_i=\varepsilon \leq 1$ for all $i$, we have that 
\begin{equation}
    \varepsilon' = \frac{ k \varepsilon^2}{2} + \sqrt{k} \varepsilon \sqrt{2 \log\!\left( \frac{1}{\delta} \right)}.
\end{equation}
\end{theorem}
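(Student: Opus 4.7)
The plan is to apply the quantum moments accountant framework of Section~\ref{sec:qma}: establish a per-channel $\alpha$-order moment bound for each pure $(\varepsilon_i,0)$-QDP channel, invoke Corollary~\ref{cor:qma-composition} to compose additively, and finally optimize the resulting expression in the R\'enyi order $\alpha=1+\lambda$.

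First, I would upgrade each pure QDP guarantee to a two-sided operator inequality. Since $\Tr(MA_i(\rho_i))\le e^{\varepsilon_i}\Tr(MA_i(\sigma_i))$ for every $0\preceq M\preceq I$, taking $M$ to be the projector onto the negative-eigenvalue subspace of $e^{\varepsilon_i}A_i(\sigma_i)-A_i(\rho_i)$ forces $A_i(\rho_i)\preceq e^{\varepsilon_i}A_i(\sigma_i)$, and swapping the roles of $\rho_i$ and $\sigma_i$ yields the matching lower bound. Consequently, the operator $X_i := A_i(\sigma_i)^{-1/2}A_i(\rho_i)A_i(\sigma_i)^{-1/2}$ has spectrum in $[e^{-\varepsilon_i},e^{\varepsilon_i}]$, while $\Tr(A_i(\sigma_i)X_i)=\Tr A_i(\rho_i)=1$ by cyclicity of the trace.

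Second, I would bound $\Tr(A_i(\sigma_i)X_i^{\alpha})$ by reducing to a classical pure-DP moment problem. Diagonalizing $X_i=\sum_j x_j|j\rangle\langle j|$ in its eigenbasis and setting $q_j:=\langle j|A_i(\sigma_i)|j\rangle$, one obtains $\Tr(A_i(\sigma_i)X_i^{\alpha})=\sum_j q_j x_j^{\alpha}$ with $q_j\ge 0$, $\sum_j q_j=1$, $x_j\in[e^{-\varepsilon_i},e^{\varepsilon_i}]$, and $\sum_j q_j x_j=1$. The classical pair $Q_j:=q_j$, $P_j:=q_j x_j$ is then pure $\varepsilon_i$-DP, so the standard Bun--Steinke-style lemma that pure $\varepsilon$-DP implies $(\alpha,\alpha\varepsilon^2/2)$-R\'enyi DP (equivalently, $\tfrac{\varepsilon^2}{2}$-zCDP) yields
\[
\log\Tr\!\bigl(A_i(\sigma_i)X_i^{\alpha}\bigr)
\;=\;\log\sum_j q_j x_j^{\alpha}
\;\le\;\tfrac{1}{2}\,\alpha(\alpha-1)\,\varepsilon_i^2,
\]
so the quantum moments accountant satisfies $\alpha_{A_i}(\alpha)\le \tfrac{\alpha\varepsilon_i^2}{2}$ for every $\alpha>1$.

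Third, I would apply Corollary~\ref{cor:qma-composition} to the tensor-product composition on product neighbors, obtaining $(\alpha,\tfrac{\alpha}{2}\sum_i\varepsilon_i^2)$-measured R\'enyi DP and hence $(\varepsilon'(\alpha),\delta)$-QDP against arbitrary POVMs with $\varepsilon'(\alpha) = \tfrac{\alpha}{2}\sum_{i=1}^k\varepsilon_i^2 + \tfrac{\log(1/\delta)}{\alpha-1}$. Writing $S:=\sum_i\varepsilon_i^2$ and $\lambda:=\alpha-1$, this equals $\tfrac{S}{2}+\tfrac{S\lambda}{2}+\tfrac{\log(1/\delta)}{\lambda}$; the last two terms are minimized at $\lambda^\star=\sqrt{2\log(1/\delta)/S}$, yielding $\varepsilon'=\tfrac{S}{2}+\sqrt{2S\log(1/\delta)}$, and the identical-$\varepsilon$ case follows by substitution $\varepsilon_i=\varepsilon$. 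The main obstacle is the second step: verifying the classical pure-DP to R\'enyi-DP conversion with the stated constant $\alpha\varepsilon_i^2/2$ over the full range $\varepsilon_i\in[0,1]$ and all $\alpha>1$, rather than only in the asymptotic regime $\varepsilon_i\to 0$. Once this quadratic moment bound is in hand, the remaining steps are a direct application of Corollary~\ref{cor:qma-composition} and a single-variable convex optimization.
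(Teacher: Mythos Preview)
Your proposal is correct and reaches the same final bound, but it takes a genuinely different route from the paper's proof. The paper works with the \emph{sandwiched} R\'enyi divergence $\widetilde D_\alpha$ (Definition~\ref{def:sandwiched-renyi}), which satisfies both additivity under tensor products and the data-processing inequality; it invokes an external result \cite[Propositions~9,~13]{NuradhaGW24} to obtain the per-channel bound $\widetilde D_\alpha(A_i(\rho_i)\|A_i(\sigma_i))\le \alpha\varepsilon_i^2/2$, sums over $i$ by additivity, passes to measured R\'enyi via DPI, and then optimizes in $\alpha$. You instead stay entirely within the paper's own quantum moments accountant framework of Section~\ref{sec:qma}: you prove the per-channel MMGF bound directly by observing that the spectrum of $X_i$ lies in $[e^{-\varepsilon_i},e^{\varepsilon_i}]$ with $\Tr(A_i(\sigma_i)X_i)=1$, so $\Tr(A_i(\sigma_i)X_i^{\alpha})=\sum_j q_j x_j^{\alpha}$ is exactly the classical $\alpha$-moment of a pure $\varepsilon_i$-DP pair, and then appeal to Bun--Steinke (pure $\varepsilon$-DP $\Rightarrow$ $\tfrac{\varepsilon^2}{2}$-zCDP) for the inequality $\log\sum_j q_j x_j^{\alpha}\le\tfrac12\alpha(\alpha-1)\varepsilon_i^2$. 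Corollary~\ref{cor:qma-composition} then handles both additivity and the passage to measured R\'enyi without any DPI for the MMGF divergence.

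What each approach buys: the paper's argument is shorter by outsourcing the per-channel bound and relying on the well-known DPI of $\widetilde D_\alpha$. Your argument is more self-contained within the paper and nicely illustrates that the MMGF accountant alone is already strong enough for this theorem, with Theorem~\ref{thm:qma-to-measured-rdp-rigorous} substituting for DPI. The ``main obstacle'' you flag is a known result: Bun--Steinke's Proposition~1.4 establishes $D_\alpha(P\|Q)\le\tfrac{\alpha\varepsilon^2}{2}$ for \emph{all} $\alpha>1$ and all $\varepsilon\ge 0$ whenever $P/Q\in[e^{-\varepsilon},e^{\varepsilon}]$ pointwise, so no asymptotic restriction is needed. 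After that, the two proofs coincide at the optimization step and produce identical $\varepsilon'$.
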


\begin{proof}
Choose $D_\alpha$ satisfying data-processing and additivity for $\alpha>1$ (For example, Sandwiched R\'enyi divergence in Definition~\ref{def:sandwiched-renyi}). We also have $i \in \{1, \ldots, k\}$. 
$A_i$
    satisfying $(\varepsilon_i, 0)$-QDP \footnote{{In~\citep{NuradhaGW24}, the result is shown for the Quantum Pufferfish Privacy (QPP) framework,
    which generalizes the QDP definition.}}
    implies that we have from~\cite[Proposition~9, Proposition~13]{NuradhaGW24} 
    \begin{equation}
        \sup_{\rho \sim \sigma} D_\alpha( A_i (\rho) \Vert A_i(\sigma)) \leq \min \left\{ \frac{\varepsilon_i^2 \alpha}{2}, \varepsilon_i \right\}. \label{eq:i-renyi_bound}
    \end{equation}
  Then, consider the following:
\begin{align}
    & \sup_{\rho_i \sim \sigma_i \ \forall i } D_{\alpha}\left( A_1(\rho_1)  \otimes \cdots \otimes A_k(\rho_k) \Vert A_1(\sigma_1)  \otimes \cdots \otimes A_k(\sigma_k) \right)  \notag \\
   &=  \sup_{\rho_i \sim \sigma_i \ \forall i }  \sum_{i=1}^k D_{\alpha}\left( A_i(\rho_i)  \Vert A_i(\sigma_i)  \right) \\ 
   & \leq \sum_{i=1}^k  \min \left\{ \frac{\varepsilon_i^2 \alpha}{2}, \varepsilon_i \right\},
\end{align}
where the equality follows by the additivity of R\'enyi divergence and the inequality follows from~\eqref{eq:i-renyi_bound}.

Let $0 \preceq M_x \preceq I$ (note that $M_x$ is a measurement operator on $k$ sub-systems), define the following:
\begin{align}
    p_x(\rho)  &\coloneqq \Tr\!\left[M_x \left( A_1(\rho_1)  \otimes \cdots \otimes A_k(\rho_k)\right)\right] \\ 
     q_x(\sigma) &\coloneqq \Tr\!\left[M_x \left(A_1(\sigma_1)  \otimes \cdots \otimes A_k(\sigma_k) \right) \right]
\end{align}
Then, by the data-processing of R\'enyi divergence, we have that 
\begin{align}
    \sup_{\rho_i \sim \sigma_i \ \forall i } D_{\alpha}\left( A_1(\rho_1)  \otimes \cdots \otimes A_k(\rho_k) \Vert A_1(\sigma_1)  \otimes \cdots \otimes A_k(\sigma_k) \right)  \geq  \sup_{\rho_i \sim \sigma_i \ \forall i } \sup_{\{M_x\}_x} D_\alpha( p_x(\rho) \Vert q_x(\sigma)),
\end{align}
where the second supremization is over all POVMs over $k$-subsystems. 

So we have that for all $\{M_x\}_x$ and neighboring pairs, we have 
\begin{equation}
   D_\alpha( p_x(\rho) \Vert q_x(\sigma))   \leq \sum_{i=1}^k  \min \left\{ \frac{\varepsilon_i^2 \alpha}{2}, \varepsilon_i \right\} \leq \sum_{i=1}^k \frac{\varepsilon_i^2 }{2} \alpha = \zeta \alpha, \label{eq:upper_bound_with_Zeta}
\end{equation}
where $\zeta \coloneq \sum_{i=1}^k \varepsilon_i^2/2$. 
This is the point where we can utilize various classical procedures to obtain the desired composition result. 

By following the reasoning of the proof of~\cite[Proposition~3]{mironov2017renyi} by utilizing \cite[Proposition~10]{mironov2017renyi}, we have that 
\begin{equation}
    p_x(\rho) \leq e^{\varepsilon_\alpha} q_x(\sigma) + \delta
\end{equation}
for $\delta \in (0,1)$ and $\varepsilon_\alpha \coloneqq \zeta \alpha + \frac{\log(1/\delta)}{\alpha-1}$ (By choosing $D_\alpha$ as Sandwiched R\'enyi, we can also use Theorem~\ref{thm:srdp-to-qdp} to arrive at the above conclusion). This holds for all $0 \leq M_x \leq I$ and all neighboring pairs, so we have the guarantees of $(\varepsilon_\alpha, \delta)$-QDP. 

Note that the above analysis is valid for all $\alpha >1$. With that the privacy parameter can be optimized by obtaining 
\begin{equation}
    \varepsilon' = \min_{\alpha>1} \left(\zeta \alpha + \frac{\log(1/\delta)}{\alpha-1} \right). 
\end{equation}
Let $f(\alpha'\equiv\alpha-1)= \zeta (\alpha'+1)  + b/\alpha'$ by denoting $b \equiv \log(1/\delta) $. Then, 
\begin{equation}
    f'(\alpha')=\zeta- b/(\alpha')^2.
\end{equation}
With that, the minimum is achieved at $    \alpha' = \sqrt{b/ \zeta}  $. 
So that, we have 
\begin{equation}
    \varepsilon' = \zeta + 2 \sqrt{ \zeta \log\!\left( \frac{1}{\delta} \right) },
\end{equation}
and plugging that $\zeta= \sum_{i=1}^k \varepsilon_i^2 /2$, we conclude the proof.
\end{proof}

\subsection{Advanced Composition for Local Adversaries}

In the above sub-section, we prove an advanced composition result for $(\varepsilon, \delta)$-QDP, where $\delta=0$ when the adversary is allowed to choose any joint measurement on the composed system. It is not exactly clear how to generalize this result for $\delta \neq 0$ in general for all possible measurements applied by an adversary. Next, we look into obtaining stronger composition results under restrictions on the measurements (e.g.; locality of measurements) one could do on the composed quantum channel comprising $k$ separate quantum sub-systems composed in the tensor product fashion.

Let us consider the following measurement set that is related to local operations and classical post-processing:
\begin{equation}
    \mathcal{M}_{\operatorname{LO}^*} \coloneqq \left\{ \sum_{z_1,\ldots,z_k} T(z_1,\dots,z_k) \  M_1^{z_1}\otimes\cdots\otimes M_k^{z_k} :  \ \{M_i^{z_i}\}_{z_i\in\mathcal Z_i} \textnormal{ is a POVM } \forall i,  \ T(\cdot) \in [0,1] \right\}. 
\end{equation}
Note that the function $T(\cdot)$ can be understood as the classical processing component that takes each local measurement outcome as input.

\begin{example}[Local measurements and classical processing]
    $\mathcal{M}_{\operatorname{LO}^*} $ consists of all measurements one could do locally, independently on each sub-system, and then do classical post-processing. Note that it is not required for 
    $\sum_{z_1,\ldots,z_k} T(z_1,\dots,z_k) =1$, even though it is an special case. 
    
    As an example, consider $k=2$, and each sub-system is a qubit system, where each party performs the computational basis POVM (i.e.; $M^{0}_i =|0\rangle\!\langle 0|$ and $M^{1}_i =|1\rangle\!\langle 1|$ for all $i \in \{1,2\}$). Then, in the classical processing, it will accept iff both outcomes are not the same (i.e; $T(z_1, z_2)=1$ if $z_1 \neq z_2$ and $T(z_1, z_2)=0$ if $z_1 = z_2$). 
\end{example}

\begin{proposition}
    Let $A_i$ satisfies $(\varepsilon_i, 0)$-QDP. Then for product neighbors $ \otimes_{i=1}^k \rho_i \sim \otimes_{i=1}^k \sigma_i$ and $M\in \mathcal{M}_{\operatorname{LO}^*}$, we have that 
    \begin{equation}
        \Tr\!\left[ M \left( \bigotimes_{i=1}^k A_i(\rho_i) \right) \right] \leq e^{\bar{\varepsilon}} \Tr\!\left[ M \left( \bigotimes_{i=1}^k A_i(\sigma_i) \right) \right] + {\delta},
    \end{equation}
where for all $\delta > 0$ and
\begin{align}
    \bar{\varepsilon} & \coloneqq  \sum_{i=1}^k \varepsilon_i \left( \frac{e^{\varepsilon_i}-1}{e^{\varepsilon_i} +1} \right) + \sqrt{2 \log\!\left( \frac{1}{\delta} \right) \sum_{i=1}^k {\varepsilon_i^2} }. 
\end{align}
\end{proposition}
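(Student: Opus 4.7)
The key idea is to reduce the claim to classical advanced composition via the explicit local--classical factorization of measurements in $\mathcal{M}_{\operatorname{LO}^*}$. First, I expand the acceptance probability. For any $M=\sum_{z}T(z_1,\ldots,z_k)\,M_1^{z_1}\otimes\cdots\otimes M_k^{z_k}\in\mathcal{M}_{\operatorname{LO}^*}$, the tensor-product structure of $\bigotimes_i A_i(\rho_i)$ and multiplicativity of the trace yield
\[
\Tr\!\left[M\bigotimes_{i=1}^k A_i(\rho_i)\right]
=\sum_{z_1,\ldots,z_k}T(z_1,\ldots,z_k)\prod_{i=1}^k p_i(z_i),
\qquad p_i(z_i)\coloneqq\Tr\!\left[M_i^{z_i}A_i(\rho_i)\right],
\]
and similarly $q_i(z_i)\coloneqq\Tr[M_i^{z_i}A_i(\sigma_i)]$.

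Second, I observe that each single-system quantity satisfies a classical pure-DP constraint. Since $0\preceq M_i^{z_i}\preceq I$ is a valid measurement operator, $(\varepsilon_i,0)$-QDP of $A_i$ applied pointwise in $z_i$ gives $p_i(z_i)\le e^{\varepsilon_i}q_i(z_i)$ and $q_i(z_i)\le e^{\varepsilon_i}p_i(z_i)$, so the classical distributions $p_i,q_i$ are $\varepsilon_i$-indistinguishable. The product laws $P\coloneqq p_1\times\cdots\times p_k$ and $Q\coloneqq q_1\times\cdots\times q_k$ are therefore the outputs of $k$ \emph{independent} classical $\varepsilon_i$-pure-DP mechanisms on the product neighbors.

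Third, I interpret the factor $T(z_1,\ldots,z_k)\in[0,1]$ as a randomized classical post-processing that accepts with probability $T(z)$ given outcomes $z=(z_1,\ldots,z_k)$; the $M$-acceptance probability is then the classical acceptance probability $\E_{z\sim P}[T(z)]$ of this post-processing. Applying the classical Kairouz--Oh--Viswanath advanced composition theorem to the $k$ independent pure-DP mechanisms, together with post-processing invariance, yields
\[
\sum_{z}T(z)P(z)
\;\le\;
e^{\bar{\varepsilon}}\sum_{z}T(z)Q(z)\;+\;\delta,
\]
with $\bar{\varepsilon}=\sum_{i=1}^k\varepsilon_i\tfrac{e^{\varepsilon_i}-1}{e^{\varepsilon_i}+1}+\sqrt{2\log(1/\delta)\sum_{i=1}^k\varepsilon_i^2}$, which is exactly the desired bound after substituting the trace expressions from step one.

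The main conceptual work has already been done in setting up $\mathcal{M}_{\operatorname{LO}^*}$: once measurements factorize into local POVMs plus a classical postprocessing $T$, the joint $P,Q$ are genuine product distributions and quantum peculiarities (e.g.\ entangled tests, joint measurement incompatibility that drove Theorem~\ref{thm:no-go-basic-comp}) are avoided. The only minor subtlety is verifying that KOV accommodates a subprobability test $T$ rather than a $\{0,1\}$-valued event; this is immediate by randomized post-processing or, equivalently, by reducing to a binary test via an independent uniform coin and noting that the DP inequality is preserved under mixtures of $[0,1]$-events.
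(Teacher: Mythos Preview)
Your proposal is correct and follows the same essential reduction as the paper: use the $\mathcal{M}_{\operatorname{LO}^*}$ factorization to write the quantum acceptance probability as $\mathbb{E}_{P}[T]$ for the product measure $P=\prod_i p_i$, observe that each pair $(p_i,q_i)$ is classically $\varepsilon_i$-indistinguishable by pointwise application of $(\varepsilon_i,0)$-QDP, and then finish with classical advanced composition plus post-processing. The only difference is at the last step: you invoke the Kairouz--Oh--Viswanath theorem as a black box, whereas the paper re-derives the needed inequality inline by bounding $\mathbb{E}_{p_i}[\log(p_i/q_i)]\le\varepsilon_i\tfrac{e^{\varepsilon_i}-1}{e^{\varepsilon_i}+1}$ via the tight KL bound for $\varepsilon$-indistinguishable pairs and then applying Hoeffding's inequality to the independent, $[-\varepsilon_i,\varepsilon_i]$-bounded privacy-loss summands $L_i=\log(p_i/q_i)$. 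Your route is shorter and is in fact exactly the route the paper itself takes for the subsequent, more general result with $\delta_i\neq 0$ (Proposition~\ref{prop:all_local_advanced}); the inline Hoeffding derivation buys self-containment and makes explicit which concentration step produces the $\sqrt{2\log(1/\delta)\sum_i\varepsilon_i^2}$ term. Your handling of the $[0,1]$-valued $T$ via a randomized accept/reject coin is equivalent to the layer-cake identity $\mathbb{E}_P[T]=\int_0^1 P(\{T\ge t\})\,dt$ that the paper uses in that later proof.
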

\begin{proof}

Fix $M\in\mathcal M_{\operatorname{LO}^*}$. By definition, there exist POVMs
$\{M_i^{z_i}\}_{z_i\in\mathcal Z_i}$ and a function $T:\mathcal Z_1\times\cdots\times\mathcal Z_k\to[0,1]$
such that
\begin{equation}
    M=\sum_{z_1,\ldots,z_k} T(z_1,\ldots,z_k) \ M_1^{z_1}\otimes\cdots\otimes M_k^{z_k}.
\end{equation}

Define the classical outcome space $\mathcal{Z} \coloneqq \mathcal {Z}_1\times\cdots\times\mathcal {Z}_k$ and,
for $z=(z_1,\ldots,z_k)\in\mathcal Z$, define the local outcome probabilities
\begin{equation}
    p_i(z_i)\coloneqq\Tr\!\big[M_i^{z_i}\,A_i(\rho_i)\big],
\qquad
q_i(z_i)\coloneqq\Tr\!\left[M_i^{z_i}\,A_i(\sigma_i)\right].
\end{equation}

Since the global output states are a product of states and the measurement is a product POVM at the level of outcomes,
the induced joint distributions factor as follows:
\begin{equation}
    P(z)\coloneqq\prod_{i=1}^k p_i(z_i),
\qquad
Q(z)\coloneqq\prod_{i=1}^k q_i(z_i).
\end{equation}

Moreover, the acceptance probability of the $\operatorname{LO}^*$ test $M$ is exactly a classical post-processing leading to the following equivalent formulation with expectation over $P$ and $Q$:
\begin{equation}\label{eq:accept-as-expectation}
\Tr\!\left[ M \left( \bigotimes_{i=1}^k A_i(\rho_i) \right) \right] = \mathbb{E}_{Z\sim P}[T(Z)],
\qquad
\Tr\!\left[ M \left( \bigotimes_{i=1}^k A_i(\sigma_i) \right) \right] = \mathbb{E}_{Z\sim Q}[T(Z)].
\end{equation}

For each $i$, define the privacy-loss random variable
\begin{equation}
    L_i(z_i)\coloneqq\log\frac{p_i(z_i)}{q_i(z_i)}.
\end{equation}

By the QDP guarantee for  all measurement operators $M_i^{z_i}$ acting on $A_i$,
we have the pointwise bounds
\begin{equation}\label{eq:Li-bounded}
e^{-\varepsilon_i}\leq \frac{p_i(z_i)}{q_i(z_i)} \leq  e^{\varepsilon_i}
\quad \Longrightarrow \quad
|L_i(z_i)|\le \varepsilon_i\quad \forall z_i.
\end{equation}
Define the total privacy loss
\begin{equation}
    L(z)\coloneqq\sum_{i=1}^k L_i(z_i)=\log\frac{P(z)}{Q(z)}.
\end{equation}
Let 
\begin{align}
    \mu_i & \coloneqq \mathbb{E}_{z_i\sim p_i}[L_i(z_i)] \\
   & = \sum_{z_i} p_i(z_i) \log\!\left( \frac{p_i(z_i)}{q_i(z_i)}\right)  \eqqcolon D_{\operatorname{KL}}(p_i\|q_i),
\end{align}
where the last equality uses the definition of KL-divergence (relative entropy for commuting states), and 
\begin{equation}
    \mu\coloneqq\sum_{i=1}^k\mu_i.
\end{equation}

Since $\forall z_i, \ e^{-\varepsilon_i} \leq \frac{p_i(z_i)}{q_i(z_i)} \leq e^{\varepsilon} $, leading to the formulation of classical differential privacy, we have (\citep[Theorem~5]{DP_Sten}; see also \citep{BS16,harrison2025exact})

\begin{equation}
   \mu_i = D_{\operatorname{KL}}(p_i\|q_i) \leq \varepsilon_i \left( \frac{e^{\varepsilon_i}-1}{e^{\varepsilon_i} +1} \right),
\end{equation}

Also, we get
\begin{equation}\label{eq:mu-bound}
\mu_i \leq \varepsilon_i \left( \frac{e^{\varepsilon_i}-1}{e^{\varepsilon_i} +1} \right)
\quad\Longrightarrow\quad
\mu \leq \sum_{i=1}^k \varepsilon_i \left( \frac{e^{\varepsilon_i}-1}{e^{\varepsilon_i} +1} \right).
\end{equation}

Now set
\begin{equation}
    t\coloneqq\sqrt{2\log\!\left(\frac1\delta\right)\sum_{i=1}^k\varepsilon_i^2},
\qquad
\bar\varepsilon\coloneqq\mu+t,
\end{equation}
and define the event
\begin{equation}
    G\coloneqq\{z\in\mathcal Z:  L(z)\leq \bar\varepsilon\}
\end{equation}

Under $P$, the coordinates $z_i$ are independent, hence $L_1(z_1),\ldots,L_k(z_k)$ are independent.
Also, by \eqref{eq:Li-bounded}, each centered variable
$X_i \coloneqq L_i(z_i)-\mathbb{E}_{P}[L_i(z_i)] \in [a_i, b_i]$ with $|b_i-a_i | \leq 2 \varepsilon_i$.
Then, Hoeffding's inequality yields the following with the choice of $t$:
\begin{align}
     P\!\left( L-\mathbb{E}_P[L] > t \right) &\leq  \exp\!\left(-\frac{2 t^2}{\sum_{i=1}^k (b_i-a_i)^2 }\right) \\
     & \leq \exp\!\left(-\frac{t^2}{2\sum_{i=1}^k \varepsilon_i^2}\right) = \delta, 
\end{align}
where we used $|b_i-a_i| \leq 2 \varepsilon_i$.

Since $\mathbb{E}_P[L]=\sum_i \mathbb{E}_{p_i}[L_i]=\mu$, this shows
\begin{equation}\label{eq:P-bad}
P(G^c)=P(L>\mu+t)\leq \delta.
\end{equation}

For any $T(\cdot)\in[0,1]$, 
with $P(z)=e^{L(z)}Q(z)$, we can write
\begin{equation}
    \mathbb{E}_{P}[T] = \sum_z P(z)T(z)=\sum_z Q(z)e^{L(z)}T(z)=\mathbb{E}_Q[e^{L}T].
\end{equation}

Now Split over $G$ and $G^c$ with $\mathbf 1_{B}$ is the indicator function on the event $B$:
\begin{align}
    \mathbb{E}_Q[e^L T]
& = \mathbb{E}_Q[e^L T\mathbf 1_G] + \mathbb{E}_Q[e^L T\mathbf 1_{G^c}]\\
& \leq e^{\bar\varepsilon}\mathbb{E}_Q[T] + \mathbb{E}_Q[e^L \mathbf 1_{G^c}],
\end{align}
where the last inequality holds since on $G$ we have $e^L\le e^{\bar\varepsilon}$, $0\le T\leq 1$, and $\mathbf 1_G \leq 1$.
Then, observe that
\begin{align}
    \mathbb{E}_Q[e^L \mathbf 1_{G^c}] &= \sum_z Q(z)e^{L(z)}\mathbf 1_{G^c}(z)\\
&= \sum_z P(z)\mathbf 1_{G^c}(z)\\ & = P(G^c) \\ 
&\leq \delta,
\end{align}
where the second equality by $L(z)= \log\frac{P(z)}{Q(z)}$, and the last inequality
by \eqref{eq:P-bad}. Therefore,
$\mathbb{E}_P[T] \leq e^{\bar\varepsilon} \mathbb{E}_Q[T] + \delta,$
and by~\eqref{eq:accept-as-expectation} and~\eqref{eq:mu-bound}, we conclude the proof.
\end{proof}

Next, we derive an advanced composition result for the setting that holds even when $\delta_i \neq 0$ with the adversary performing measurements on ${\mathcal{M}}_{\operatorname{LO}*}$ by utilizing advanced composition for classical differentially private mechanisms in~\Cref{lem:Theorem_Classical_advanced_Comp}.

\begin{proposition}\label{prop:all_local_advanced}
    Let $A_i$ satisfy $(\varepsilon_i, \delta_i)$-QDP. Then for product neighbors $ \otimes_{i=1}^k \rho_i \sim \otimes_{i=1}^k \sigma_i$ and $M\in {\mathcal{M}}_{\operatorname{LO}^*}$, we have that 
    \begin{equation}
        \Tr\!\left[ M \left( \bigotimes_{i=1}^k A_i(\rho_i) \right) \right] \leq e^{\bar{\varepsilon}} \Tr\!\left[ M \left( \bigotimes_{i=1}^k A_i(\sigma_i) \right) \right] + \bar{\delta},
    \end{equation}
where for all $\delta \in (0,1)$ and
\begin{align}
    \bar{\varepsilon} & \coloneqq  \min \left\{ \sum_{i=1}^k \varepsilon_i , \  \sum_{i=1}^k \varepsilon_i \left( \frac{e^{\varepsilon_i}-1}{e^{\varepsilon_i} +1} \right) + \sqrt{2  \sum_{i=1}^k {\varepsilon_i^2} \ \min\left\{\log\!\left( \frac{1}{\delta} \right), \log\!\left(e+ \frac{ \sum_{i=1}^k \varepsilon_i^2 }{\delta} \right)  \right\}} \right \} \\ 
    \bar{\delta} & \coloneqq 1- (1-\delta) \prod_{i=1}^k (1-\delta_i).
\end{align}
\end{proposition}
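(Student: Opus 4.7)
The overall strategy is to exploit the locality of $\mathcal{M}_{\operatorname{LO}^*}$ to reduce the quantum composition problem to a purely classical advanced composition problem on product distributions. An $\operatorname{LO}^*$ test consists of local POVMs on each subsystem followed by $[0,1]$-valued classical processing. Because the global output is a tensor product and the local POVMs act on disjoint tensor factors, the induced joint classical distribution over outcome tuples factorizes into a product. My plan is to (i) extract two product classical distributions $P,Q$ from the experiment, (ii) upgrade the per-site QDP guarantee to event-level classical $(\varepsilon_i,\delta_i)$-DP for each marginal pair, (iii) invoke the classical advanced composition lemma (Lemma~\ref{lem:Theorem_Classical_advanced_Comp}) on the $k$ independent marginals, and (iv) close the argument under the classical post-processing $T$.

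Concretely, write $M=\sum_z T(z)\,M_1^{z_1}\otimes\cdots\otimes M_k^{z_k}$ and set $p_i(z_i):=\Tr[M_i^{z_i}A_i(\rho_i)]$ and $q_i(z_i):=\Tr[M_i^{z_i}A_i(\sigma_i)]$, with product distributions $P(z):=\prod_i p_i(z_i)$ and $Q(z):=\prod_i q_i(z_i)$. Exactly as in the proof of the preceding proposition, the acceptance probabilities rewrite as
\[
\Tr\!\Big[M \bigotimes_{i=1}^k A_i(\rho_i)\Big]=\mathbb{E}_{Z\sim P}[T(Z)],\qquad
\Tr\!\Big[M \bigotimes_{i=1}^k A_i(\sigma_i)\Big]=\mathbb{E}_{Z\sim Q}[T(Z)].
\]
For each event $S\subseteq\mathcal{Z}_i$, the operator $M_i(S):=\sum_{z_i\in S}M_i^{z_i}$ satisfies $0\preceq M_i(S)\preceq I$, so applying the $(\varepsilon_i,\delta_i)$-QDP bound to $M_i(S)$ yields $p_i(S)\le e^{\varepsilon_i}q_i(S)+\delta_i$, i.e.\ the classical pair $(p_i,q_i)$ is $(\varepsilon_i,\delta_i)$-DP.

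Since $(P,Q)$ is exactly the output distribution of $k$ independent classical mechanisms, the $i$-th of which is $(\varepsilon_i,\delta_i)$-DP, the classical advanced composition lemma applies and yields $(\bar\varepsilon,\bar\delta)$-DP between $P$ and $Q$ with the parameters stated in the proposition. The first branch of the min in $\bar\varepsilon$ is basic composition; the second branch uses the sharp KL bound $D_{\operatorname{KL}}(p_i\|q_i)\le\varepsilon_i(e^{\varepsilon_i}-1)/(e^{\varepsilon_i}+1)$ (via \cite[Theorem 5]{DP_Sten}) together with Hoeffding-type concentration on the total privacy loss, in the form of Dwork--Rothblum--Vadhan and Kairouz--Oh--Viswanath, which produces the $\min\{\log(1/\delta),\log(e+\sum_i\varepsilon_i^2/\delta)\}$ inside the square root. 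The sharp $\bar\delta=1-(1-\delta)\prod_i(1-\delta_i)$ arises from independence: the per-site ``bad event'' of probability $\delta_i$ is avoided independently across $i$, and the concentration failure event of probability $\delta$ is avoided on top, so a joint-independence union bound gives the stated product form.

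Finally, since $T:\mathcal{Z}\to[0,1]$, integrating the event-level inequality $P(A)\le e^{\bar\varepsilon}Q(A)+\bar\delta$ against $T$ (equivalently, applying post-processing closure of approximate DP to the $[0,1]$-valued random variable $T$) yields $\mathbb{E}_P[T]\le e^{\bar\varepsilon}\mathbb{E}_Q[T]+\bar\delta$, which by the rewriting above is the desired bound. The main obstacle is essentially bookkeeping rather than conceptual: the quantum-to-classical reduction is clean because $\mathcal{M}_{\operatorname{LO}^*}$ is by design local-plus-classical, so all of the substantive content lives in the classical advanced composition lemma. The only care required on the quantum side is verifying that per-site QDP upgrades to \emph{event-level} classical DP (which it does, since partial sums of POVM elements are valid measurement operators), and that one invokes a sufficiently sharp classical composition statement to recover both branches of $\bar\varepsilon$ and the optimal product-form $\bar\delta$ rather than the cruder $\delta+\sum_i\delta_i$ that a naive union bound would give.
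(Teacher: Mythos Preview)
Your proposal is correct and follows essentially the same route as the paper: reduce the $\operatorname{LO}^*$ acceptance probability to $\mathbb{E}_P[T]$ vs.\ $\mathbb{E}_Q[T]$ for product distributions, lift per-site QDP to event-level classical $(\varepsilon_i,\delta_i)$-DP via partial sums of POVM elements, invoke the classical advanced-composition result of \cite{KairouzOhVi17} (Lemma~\ref{lem:Theorem_Classical_advanced_Comp}), and then pass from events to $[0,1]$-valued tests. The paper makes the last step explicit via the layer-cake identity $\mathbb{E}_P[T]=\int_0^1 P(\{T\ge t\})\,\mathrm{d}t$, which is exactly what you mean by ``integrating the event-level inequality against $T$''; your additional commentary on the internal structure of $\bar\varepsilon$ and $\bar\delta$ is not needed here since those parameters come straight from the cited classical lemma.
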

\begin{proof}
Fix $M\in\mathcal M_{\operatorname{LO}^*}$. By definition, there exist POVMs
$\{M_i^{z_i}\}_{z_i\in\mathcal Z_i}$ and a function $T:\mathcal Z_1\times\cdots\times\mathcal Z_k\to[0,1]$
such that
\begin{equation}
    M=\sum_{z_1,\ldots,z_k} T(z_1,\ldots,z_k) \ M_1^{z_1}\otimes\cdots\otimes M_k^{z_k}.
\end{equation}
Define the classical outcome space $\mathcal{Z} \coloneqq \mathcal {Z}_1\times\cdots\times\mathcal {Z}_k$ and,
for $z=(z_1,\ldots,z_k)\in\mathcal Z$, define the local outcome probabilities
\begin{equation}
    p_i(z_i)\coloneqq\Tr\!\big[M_i^{z_i}\,A_i(\rho_i)\big],
\qquad
q_i(z_i)\coloneqq\Tr\!\left[M_i^{z_i}\,A_i(\sigma_i)\right].
\end{equation}

Since the global output states are a product of states and the measurement is a product POVM at the level of outcomes,
the induced joint distributions factor as follows:
\begin{equation}
    P(z)\coloneqq\prod_{i=1}^k p_i(z_i),
\qquad
Q(z)\coloneqq\prod_{i=1}^k q_i(z_i).
\end{equation}

Moreover, the acceptance probability of the $\operatorname{LO}^*$ test $M$ is exactly a classical post-processing leading to the following equivalent formulation with expectation over $P$ and $Q$:
\begin{equation}\label{eq:accept-as-expectation_2}
\Tr\!\left[ M \left( \bigotimes_{i=1}^k A_i(\rho_i) \right) \right] = \mathbb{E}_{Z\sim P}[T(Z)],
\qquad
\Tr\!\left[ M \left( \bigotimes_{i=1}^k A_i(\sigma_i) \right) \right] = \mathbb{E}_{Z\sim Q}[T(Z)].
\end{equation} \label{eq:close_to_classical}
 Recall that $A_i$ satisfies $(\varepsilon_i,\delta_i)$-QDP channel (with neighbors $\rho_i \sim \sigma_i$). Fix any POVM $\{M_i^{z_i}\}_{z_i\in\mathcal{Z}_i}$ and define the induced
classical mechanism $B_i$ that maps an input state $\omega$ to an outcome $z_i\in\mathcal{Z}_i$ with
$\operatorname{Pr}(B_i(\omega)=z_i):=\Tr[M_i^{z_i}A_i(\omega)]$. Then,  $B_i$ is $(\varepsilon_i,\delta_i)$-differentially private (classical) with respect to the same
neighboring relation on the input states.
To see that, let $S\subseteq\mathcal Z_i$ and define the measurement operator $M_S:=\sum_{z_i\in S}M_i^{z_i}$, which satisfies
$0\preceq M_S\preceq I$. Since $A_i$ is $(\varepsilon_i,\delta_i)$-QDP for neighboring inputs $\rho_i\sim\sigma_i$, we get 
\begin{equation}
    \Tr[M_S A_i(\rho_i)]\le e^{\varepsilon_i}\Tr[M_S A_i(\sigma_i)]+\delta_i.
\end{equation}
By the definition of $B_i$,
$\Tr[M_S A_i(\rho_i)]=\Pr(B_i(\rho_i)\in S)$ and similarly for $\sigma_i$, so $B_i$ satisfies 
the classical DP as in~\Cref{def: DP}.

Now, considering the composition of classical DP mechanisms $(B_1, \ldots, B_k)$ obtained by measuring $A_i(\cdot)$ with the POVM $\{M_i^{z_i}\}_{z_i \in \mathcal{Z}_i}$, together with the classical outcomes from each mechanism ($z= (z_1,\ldots, z_k) \in \mathcal{Z}_1 \times \cdots \times \mathcal{Z}_k)= \mathcal{Z}$), and  by applying classical advanced-composition guarantees for this non-adaptive setting~\cite[Theorem~3.5]{KairouzOhVi17}, we obtain 
\begin{equation}\label{eq:use_of_classical}
    \forall S \subseteq \mathcal{Z}, \  P(S) \leq e^{\bar{\varepsilon}} \ Q(S) + \bar{\delta}.
\end{equation}
with $\bar{\varepsilon}, \bar{\delta}$ defined in the Proposition statement (see also~\Cref{lem:Theorem_Classical_advanced_Comp}) and 
\begin{equation}
    P(S) = \sum_{z \in S} P(z),  \quad  Q(S) = \sum_{z \in S} Q(z).
\end{equation}
With that, choose the following $S_t \subseteq \mathcal{Z}$ for a measurable function $T: \mathcal{Z} \to [0,1]$ (measurability holds since $\mathcal{Z}$ is a finite set since $\mathcal{Z}_i$ is finite for all $i\in\{1,\ldots, k\}$): For $t \in [0,1]$
\begin{equation}
    S_t \coloneqq \left\{ z: T(z) \geq t  \right\},
\end{equation}
which leads to 
\begin{equation}
    P(S_t) \leq e^{\bar{\varepsilon}} \ Q(S_t) + \bar{\delta}.
\end{equation}
Also note that 
\begin{equation}
    \mathbb{E}_{Z\sim P}[T(Z)] = \int_{0}^1 P(S_t)\  \mathrm{d}t, \quad  \mathbb{E}_{Z\sim Q}[T(Z)] = \int_{0}^1 Q(S_t)\  \mathrm{d}t, 
\end{equation}
since 
\begin{align}
    \mathbb{E}_{Z\sim P}[T(Z)] & =\sum_{z \in \mathcal{Z}}  T(z) P(z) \\
    &= \sum_{z \in \mathcal{Z}} \left(\int_{0}^1 \mathbf{1}_{\{z: T(z) \geq t\}} \mathrm{d}t \right) \  P(z) \\
    &=\int_{0}^1  \sum_{z \in \mathcal{Z}} \mathbf{1}_{\{z: T(z) \geq t\}} P(z) \ \mathrm{d}t \\ 
    &=\int_{0}^1  P(S_t) \ \mathrm{d}t,
\end{align}
and similarly for $Q$ with $\mathbf{1}_{A}$ denoting the indicator function on set $A$.

With that, we arrive at the desired result
\begin{equation}
    \mathbb{E}_{Z\sim P}[T(Z)] \leq e^{\bar{\varepsilon}} \ \mathbb{E}_{Z\sim Q}[T(Z)] + \bar{\delta},
\end{equation} by using~\eqref{eq:accept-as-expectation_2} and concluding the proof since the above inequality holds for all $M \in \mathcal{M}_{\operatorname{LO}*}$ and product neighboring states.

\end{proof}
\begin{remark}[Improved Composition Results]
    One can also obtain strong advanced composition results by utilizing improved composition results in the classical setting (e.g.; \cite[Theorem~3.3]{KairouzOhVi17}) in~\Cref{prop:all_local_advanced} in the proof step~\eqref{eq:use_of_classical}.
    
\end{remark}

\section{Conclusion}
\label{sec:conclusion}

This work clarifies the landscape of composition guarantees for quantum differential privacy.
We showed that classical composition theorems fail in full generality for POVM-based approximate QDP, due to measurement incompatibility and correlated joint channels unique to the quantum setting.
At the same time, we demonstrated that these failures are not inherent to quantum channels per se, but arise from specific structural features absent in classical analysis.

By restricting attention to tensor-product channels acting on product neighboring inputs, we recovered clean composition guarantees using a quantum moments accountant.
Our framework separates operator-level accounting from operational privacy guarantees, enabling advanced-composition-style bounds against arbitrary measurements without invoking a data-processing inequality at the accounting stage.

Several directions remain open.
It would be valuable to understand whether variants of the quantum moments accountant can handle broader classes of channels, such as factorized but non-tensor-product compositions, or whether approximate DP guarantees can be incorporated without losing additivity.
More generally, our results suggest that progress on quantum differential privacy will require careful alignment between mathematical structure and operational threat models, rather than direct transplantation of classical proofs. This work provides both technical tools and conceptual clarity for future investigations into privacy in quantum information processing.

\section*{Acknowledgments}

DA acknowledges helpful discussions with Guy Rothblum regarding the composition of classical and quantum private mechanisms.
TN acknowledges helpful discussions with Sujeet Bhalerao, Felix Leditzky, Vishal Singh, and Mark M.~Wilde on the composition of quantum private mechanisms.
 TN also acknowledges support from the
Department of Mathematics and the IQUIST Postdoctoral Fellowship from
the Illinois Quantum Information Science and Technology Center at
the University of Illinois Urbana-Champaign.

\clearpage

\bibliographystyle{alpha}
\bibliography{main}

\clearpage

\appendix

\section{Sandwiched R\'enyi DP to Approximate QDP}

\subsection{Sandwiched R\'enyi Divergence}
\label{subsec:sandwiched-renyi}

We adopt the standard \emph{sandwiched R\'enyi divergence} from quantum
information theory as our notion of R\'enyi divergence.

\begin{definition}[Sandwiched R\'enyi divergence \cite{muller2013quantum, wilde2014strong}]
\label{def:sandwiched-renyi}
Let $\alpha>1$.  The sandwiched R\'enyi divergence is
\[
    \widetilde D_\alpha(\rho\|\sigma)
    :=
    \frac{1}{\alpha-1}
    \log
    \Tr\Big[
        \big(
            \sigma^{\frac{1-\alpha}{2\alpha}}
            \rho
            \sigma^{\frac{1-\alpha}{2\alpha}}
        \big)^\alpha
    \Big].
\]
\end{definition}

When $\rho$ and $\sigma$ commute (e.g., in the classical case), the
sandwiched R\'enyi divergence reduces to the classical R\'enyi divergence:
if $\rho=\sum_i p_i\ket{i}\!\bra{i}$ and
$\sigma=\sum_i q_i\ket{i}\!\bra{i}$ with respect to a common eigenbasis,
then
\[
    \widetilde D_\alpha(\rho\Vert\sigma)
    =
    \frac{1}{\alpha-1}
    \log \sum_i p_i^\alpha q_i^{1-\alpha},
\]
the usual classical R\'enyi divergence.

\begin{definition}[Sandwiched R\'enyi DP (quantum)]
Let $\alpha>1$. A quantum channel $A$ satisfies $(\alpha,\varepsilon)$-\emph{sandwiched R\'enyi
differential privacy} if for all neighboring inputs $\rho\sim\sigma$,
\[
\widetilde D_\alpha\!\big(A(\rho)\,\|\,A(\sigma)\big)\le \varepsilon.
\]
\end{definition}

\begin{theorem}[From sandwiched R\'enyi DP to approximate QDP]
\label{thm:srdp-to-qdp}
If a quantum channel $A$ satisfies $(\alpha,\varepsilon)$-sandwiched R\'enyi DP for some
$\alpha>1$, then for all $\delta\in(0,1)$ it satisfies $(\varepsilon',\delta)$-QDP with
\[
\varepsilon'
=
\varepsilon+\frac{\log(1/\delta)}{\alpha-1}.
\]
\end{theorem}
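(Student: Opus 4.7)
The plan is to reduce the quantum statement to its well-known classical counterpart by a single application of data processing, and then invoke the standard Rényi-DP-to-approximate-DP conversion. Fix a neighboring pair $\rho\sim\sigma$ and an arbitrary measurement operator $0\preceq M\preceq I$ on the output space. Form the two-outcome POVM $\{M,I-M\}$ and consider the induced Bernoulli distributions $P=(\Tr(M A(\rho)),\,1-\Tr(MA(\rho)))$ and $Q=(\Tr(MA(\sigma)),\,1-\Tr(MA(\sigma)))$. The measurement is a quantum channel (from $A(\rho)$ to a classical register), so the data-processing inequality for sandwiched Rényi divergence (valid for all $\alpha>1$) yields
\[
\widetilde D_\alpha(P\|Q)\ \le\ \widetilde D_\alpha\!\big(A(\rho)\,\|\,A(\sigma)\big)\ \le\ \varepsilon.
\]
Since $P,Q$ are classical (their density operators commute), the sandwiched Rényi divergence on the left reduces to the classical Rényi divergence $D_\alpha(P\|Q)$, as noted in the paragraph following Definition~\ref{def:sandwiched-renyi}.

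The second step is the classical Rényi-DP to approximate-DP conversion applied to $P$ and $Q$, exactly as in Mironov's Proposition~3 (cited in the proof of Proposition~\ref{prop:thm1-from-qma-corrected}). Concretely, from $D_\alpha(P\|Q)\le\varepsilon$, a direct Markov's inequality argument on the likelihood ratio $\log(P/Q)$ gives, for any $\delta\in(0,1)$,
\[
P(\{1\})\ \le\ e^{\varepsilon+\frac{\log(1/\delta)}{\alpha-1}}\,Q(\{1\})\ +\ \delta.
\]
Unpacking $P(\{1\})=\Tr(MA(\rho))$ and $Q(\{1\})=\Tr(MA(\sigma))$ yields precisely the $(\varepsilon',\delta)$-QDP inequality of Definition~\ref{def:qdp-povm}, with $\varepsilon'=\varepsilon+\log(1/\delta)/(\alpha-1)$. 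Since $M$ and the neighboring pair were arbitrary, the conclusion follows.

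The only ingredient that is nontrivial is the data-processing inequality for $\widetilde D_\alpha$ at $\alpha>1$, but this is a well-established property of the sandwiched Rényi divergence from quantum information theory and can be cited directly. No genuine obstacle remains: the rest is a classical Chernoff-style truncation argument that is standard and can be invoked from the Rényi DP literature rather than reproven. If a self-contained derivation is desired, one can alternatively prove the classical step by splitting $\{M=1\}$ into the subset where $\log(P/Q)\le\varepsilon'$ (on which the $e^{\varepsilon'}$ factor suffices) and its complement (whose $P$-mass is bounded by $\delta$ via Markov applied to $e^{(\alpha-1)\log(P/Q)}$), but this adds no new content.
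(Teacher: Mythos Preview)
Your proof is correct and follows essentially the same approach as the paper: apply the data-processing inequality for the sandwiched R\'enyi divergence under a (two-outcome) measurement to reduce to classical distributions, then invoke the standard classical R\'enyi-DP-to-$(\varepsilon',\delta)$-DP conversion. The paper's version is slightly terser but identical in substance, citing the same two ingredients (DPI for $\widetilde D_\alpha$ and the classical conversion) without spelling out the Markov truncation argument you outlined.
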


\begin{proof}
Fix any POVM $M$. Since measurement is a CPTP map, by the data-processing inequality for the
sandwiched R\'enyi divergence (e.g.,~\citep{Beigi13}),
\[
\widetilde D_\alpha\!\big(P_M(A(\rho))\,\|\,P_M(A(\sigma))\big)
\le
\widetilde D_\alpha\!\big(A(\rho)\,\|\,A(\sigma)\big)
\le
\varepsilon.
\]
The resulting distributions are classical, so the standard conversion from R\'enyi DP to
$(\varepsilon',\delta)$-DP applies, yielding
\[
\Pr[M(A(\rho))=1]
\le
e^{\varepsilon'}\Pr[M(A(\sigma))=1]+\delta,
\]
with $\varepsilon'=\varepsilon+\log(1/\delta)/(\alpha-1)$.
\end{proof}

\section{Classical Differential Privacy}

\begin{definition}[Differential privacy (e.g., see~\citep{DworkMNS06,DworkKMMN06})] \label{def: DP}
Fix $\epsilon,\delta>0$. A~randomized mechanism\footnote{A randomized mechanism is described by a (regular) conditional probability distribution given the data, i.e., $P_{M|X}$.} ${B}: \mathcal{X} \to \mathcal{Z}$ is $(\epsilon,\delta)$-differentially private if for all $x \sim x'$ neighboring datasets and $S \subseteq \mathcal{Z} $ measurable, we have 
\begin{equation}
\Pr\big(B(x) \in S \big) \leq e^{\epsilon} \Pr \big(B(x') \in S \big) + \delta.\label{eq:dp_def}
\end{equation}
\end{definition}

\begin{lemma}\label{lem:Theorem_Classical_advanced_Comp}[Theorem~3.5 in~\citep{KairouzOhVi17}]
Let $B_i$ be a classical mechanism satisfying $(\varepsilon_i, \delta_i)$ classical differential privacy as in~\Cref{def: DP} with $\varepsilon_i >0$ and $\delta_i \in [0,1]$ for all $i \in \{1,\ldots, k\}$. Then the $k$-fold adaptive composition of these mechanisms (also for the non-adaptive setting) satisfy $(\bar{\varepsilon}, \bar{\delta})$-classical differential privacy with 
\begin{align}
    \bar{\varepsilon} & \coloneqq  \min \left\{ \sum_{i=1}^k \varepsilon_i , \  \sum_{i=1}^k \varepsilon_i \left( \frac{e^{\varepsilon_i}-1}{e^{\varepsilon_i} +1} \right) + \sqrt{2  \sum_{i=1}^k {\varepsilon_i^2} \ \min\left\{\log\!\left( \frac{1}{\delta} \right), \log\!\left(e+ \frac{ \sum_{i=1}^k \varepsilon_i^2 }{\delta} \right)  \right\}} \right \} \\ 
    \bar{\delta} & \coloneqq 1- (1-\delta) \prod_{i=1}^k (1-\delta_i),
\end{align}
for $\delta \in (0,1)$.
\end{lemma}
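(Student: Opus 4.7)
The plan is to prove this classical advanced composition statement by controlling the \emph{privacy-loss random variable} of the $k$-fold adaptive composition, handling the two branches of the $\min$ defining $\bar\varepsilon$ separately. The first branch $\sum_i \varepsilon_i$ is immediate: basic composition applied inductively gives a $(\sum_i \varepsilon_i, 1-\prod_i(1-\delta_i))$-DP guarantee, and inflating the $\delta$ term to $\bar\delta = 1 - (1-\delta)\prod_i(1-\delta_i)$ only makes the claim weaker. So the real content is the second branch, which I would obtain in three steps.

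First I would reduce the $(\varepsilon_i,\delta_i)$-DP case to pure $(\varepsilon_i,0)$-DP via the standard coupling lemma: for any neighboring pair, every $(\varepsilon_i,\delta_i)$-DP mechanism $B_i$ can be coupled with a pure $(\varepsilon_i,0)$-DP mechanism $\widetilde B_i$ such that $\Pr[B_i \neq \widetilde B_i] \le \delta_i$ under either input. Since the composition is adaptive, I would realize the coupling round-by-round using the transcript so far as the conditioning state, and then union-bound the ``bad'' event $\bigcup_i \{B_i \neq \widetilde B_i\}$, whose probability is at most $1-\prod_i(1-\delta_i)$. Combined with an additional failure probability $\delta$ from the concentration argument below, this yields exactly the claimed $\bar\delta$.

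On the good event I would define the adaptive privacy-loss sequence $L_i = \log \frac{P_i(Z_i\mid Z_{<i})}{Q_i(Z_i\mid Z_{<i})}$, where $P,Q$ are the joint distributions of the transcript $(Z_1,\ldots,Z_k)$ under the two neighboring inputs. Pure DP yields the pointwise bound $|L_i|\le\varepsilon_i$, while the ``max-KL'' estimate for DP distributions (the binary KL inequality used already in the earlier proposition in the paper) gives the conditional mean bound $\mathbb{E}_P[L_i \mid Z_{<i}] \le \varepsilon_i(e^{\varepsilon_i}-1)/(e^{\varepsilon_i}+1)$, which is exactly the deterministic first term in $\bar\varepsilon$. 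The centered variables $L_i - \mathbb{E}_P[L_i\mid Z_{<i}]$ form a bounded martingale difference sequence with increments in $[-2\varepsilon_i,2\varepsilon_i]$, and Azuma--Hoeffding gives $\Pr_P(L - \mathbb{E}_P[L] > t) \le \exp(-t^2/(2\sum_i \varepsilon_i^2))$, solved at $t=\sqrt{2\log(1/\delta)\sum_i\varepsilon_i^2}$ to recover the $\log(1/\delta)$ form.

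The main obstacle is the sharper alternative $\log(e + \sum_i\varepsilon_i^2/\delta)$, which improves on plain Azuma--Hoeffding when $\sum_i\varepsilon_i^2$ is not much larger than $\log(1/\delta)$. For this branch I would instead bound the conditional MGF $\mathbb{E}_P[e^{s(L_i - \mathbb{E}_P[L_i\mid Z_{<i}])}\mid Z_{<i}]$ directly, using both $|L_i|\le\varepsilon_i$ and the DP-induced sub-Gaussian behavior around its mean, and then optimize the Chernoff parameter $s$ by interpolating between the Gaussian regime and the bounded regime; this sharper MGF estimate is the technical crux of the Kairouz--Oh--Viswanath argument. With the tail bound in hand, I would finally convert it to a DP guarantee via the standard identity $\Pr_P[Z\in S] = \mathbb{E}_Q[e^L \mathbf{1}_{Z\in S}]$, splitting the expectation over $\{L\le\bar\varepsilon\}$ and $\{L>\bar\varepsilon\}$ so that the first piece is bounded by $e^{\bar\varepsilon}\Pr_Q[Z\in S]$ and the second by the tail probability $\delta$, and then combining with the coupling bad event to obtain the claimed $\bar\delta$.
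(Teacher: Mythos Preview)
The paper does not prove this lemma at all: it is stated in the appendix as a direct citation of \cite[Theorem~3.5]{KairouzOhVi17} and used as a black box in the proof of Proposition~\ref{prop:all_local_advanced}. There is therefore no ``paper's own proof'' to compare your proposal against.

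That said, your outline is a faithful sketch of how the result is proved in \cite{KairouzOhVi17}: reduce $(\varepsilon_i,\delta_i)$-DP to pure $(\varepsilon_i,0)$-DP by the coupling lemma (incurring the factor $\prod_i(1-\delta_i)$ in $\bar\delta$), bound the conditional mean of each privacy-loss increment by the binary-KL estimate $\varepsilon_i(e^{\varepsilon_i}-1)/(e^{\varepsilon_i}+1)$, and control the martingale fluctuations via Azuma--Hoeffding for the $\log(1/\delta)$ branch and a sharper sub-Gaussian MGF analysis for the $\log(e+\sum_i\varepsilon_i^2/\delta)$ branch. One small slip: you write that the centered increments lie in $[-2\varepsilon_i,2\varepsilon_i]$, but since $|L_i|\le\varepsilon_i$ the range of the centered variable is at most $2\varepsilon_i$ (not $4\varepsilon_i$); your final Azuma bound $\exp(-t^2/(2\sum_i\varepsilon_i^2))$ is nonetheless the correct one for that range. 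The only genuinely nontrivial step you leave at the level of ``would use a sharper MGF estimate'' is precisely the technical heart of the Kairouz--Oh--Viswanath argument, so a full proof would need that filled in, but as a plan it is sound.
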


\end{document}